\DeclareMathOperator{\tr}{tr}
\def \eps {\varepsilon}
\newcommand{\diag}{\mathrm{diag}}
\newcommand{\cY}{{\cal Y}}
\def\Pr{\mathrm{Pr}}
\def\Exp{\mathbb{E}}
\newcommand{\nrm}[1]{\left\lVert #1 \right\rVert}
\newcommand{\Oo}{\ensuremath{\mathcal{O}}}
\def\Tr{\mathrm{Tr}}
\newtheorem{corollary}{Corollary}
\newtheorem{lemma}{Lemma}
\begin{document}
\title{Quadratic speedup for spatial search by continuous-time quantum walk} 

\author{Simon Apers$^{1}$}
\email[]{smgapers@gmail.com}
\author{Shantanav Chakraborty$^{2}$}  
\email[]{shchakra@iiit.ac.in}
\author{Leonardo Novo$^{3}$}
\email[]{lfgnovo@gmail.com}
\author{J\'{e}r\'{e}mie Roland$^{3}$}
\email[]{jeremie.roland@ulb.ac.be}

\affiliation{$^1$ IRIF, CNRS, Paris, France}
\affiliation{$^2$ CQST and CSTAR, International Institute of Information Technology Hyderabad, Telangana, India} 
\affiliation{$^3$ QuIC, Ecole Polytechnique de Bruxelles, Universit\'{e} libre de Bruxelles, Brussels, Belgium}

\begin{abstract}
Continuous-time quantum walks provide a natural framework to tackle the fundamental problem of finding a node among a set of marked nodes in a graph, known as spatial search. Whether spatial search by continuous-time quantum walk provides a quadratic advantage over classical random walks has been an outstanding problem. Thus far, this advantage is obtained only for specific graphs or when a single node of the underlying graph is marked. In this article, we provide a new continuous-time quantum walk search algorithm that completely resolves this: our algorithm can find a marked node in any graph with any number of marked nodes, in a time that is quadratically faster than classical random walks. The overall algorithm is quite simple, requiring time evolution of the quantum walk Hamiltonian followed by a projective measurement. A key component of our algorithm is a purely analog procedure to perform operations on a state of the form $e^{-tH^2}\ket{\psi}$, for a given Hamiltonian $H$: it only requires evolving $H$ for time scaling as $\sqrt{t}$.
This allows us to quadratically fast-forward the dynamics of a continuous-time classical random walk. The applications of our work thus go beyond the realm of quantum walks and can lead to new analog quantum algorithms for preparing ground states of Hamiltonians or solving optimization problems. 
\end{abstract}
\date{\today}
\maketitle
Quantum walks, the quantum analogue of classical random walks, find widespread applications across quantum information processing. Besides being a universal model of quantum computation \cite{childs2009universal}, quantum walks have proven to be a useful primitive in the design of quantum algorithms \cite{ambainis2003quantum,childs2003exponential}. A particularly appealing model is that of continuous-time quantum walks (CTQWs) which provides a physics-inspired paradigm for quantum algorithm design, based solely on the evolution of a time-independent Hamiltonian that is related to a graph. One of the most widely studied algorithmic applications of this framework has been to tackle the spatial search problem of finding marked nodes on a graph. However, even though the first spatial search algorithm to tackle this problem was proposed almost two decades ago \cite{childs2004spatial}, it is unknown whether CTQWs can generally find a node within a marked set of nodes quadratically faster than classical random walks: a spatial search algorithm by CTQW that offers this general quadratic speedup has remained elusive. In contrast, in a recent breakthrough this problem was fully resolved for \emph{discrete-time} quantum walks \cite{ambainis2019quadratic}.\\
Childs and Goldstone provided the first spatial search algorithm by CTQW which could find a single marked node in certain graphs (such as the complete graph, hypercube and lattices of dimension greater than four) quadratically faster than classical random walks \cite{childs2004spatial}. Since then, a plethora of results have demonstrated a quadratic speed-up for several specific instances of graphs \cite{childs2004spatialdirac, childs2014spatial, janmark2014global, novo2015systematic, meyer2015connectivity, chakraborty2016spatial, chakraborty2017optimal, wong2018quantum, osada2020continuous, lewis2021optimal}. The approach by Childs and Goldstone is appealing because, similar to classical random walks, the state space of the walk is spanned by the nodes of the graph. However, explicit counterexamples are known where this approach fails to provide quadratic advantages for search, even for certain highly-connected graphs \cite{chakraborty2020optimality}. Furthermore, it seems difficult to analyse this algorithm in the scenario where multiple nodes in the underlying graph are marked and general results in this setting are missing.\\
More recently, Chakraborty et al.~developed a CTQW algorithm for spatial search on any weighted graph (or, equivalently, any ergodic, reversible Markov chain). This algorithm, which can be seen as a quantum walk on the edges of the underlying graph, offers a quadratic advantage over classical random walks only when a single node is marked \cite{chakraborty2020finding}. For certain instances with multiple marked nodes, the running time of the algorithm is even slower than its classical counterpart.\\
In this article we completely solve the spatial search problem by CTQW. We provide a CTQW-based algorithm that, starting from a quantum encoding of the stationary distribution $\pi$, finds a marked node on any graph, with any number of marked nodes in a time that is (up to a logarithmic factor) quadratically faster than the corresponding classical random walk. The key ideas behind this general result are as follows. We show that evolving a state $\ket{\psi_0}$ under a Hamiltonian $H$ for a random time proportional to $\sqrt{t}$ allows us, in a certain sense, to simulate projective measurements on the state $e^{-tH^2}\ket{\psi_0}$. When $H$ corresponds to the CTQW Hamiltonian on a graph, this procedure allows us to quadratically fast-forward the dynamics of a continuous-time random walk on $P$, analogous to the results of Ref.~\cite{apers2018FF}. The expected time required by a classical random walk to solve the spatial search problem is known as the hitting time $(HT)$. For the spatial search problem then it suffices to show that if we take $t \propto HT$, then the resulting state has a large overlap with the marked elements.\\
Obtaining a generic quadratic advantage for the spatial search problem has proven to be extremely challenging also in the discrete-time quantum walk (DTQW) framework and was only recently solved in \cite{ambainis2019quadratic}. However, this does not imply the existence of a CTQW algorithm with the same performance. In fact, even though there are general procedures to simulate Hamiltonian evolution (and hence CTQWs) using DTQWs \textit{\`{a} la} Szegedy \cite{childs2010relationship,berry2015hamiltonian}, there is no known way to translate DTQW dynamics to the continuous-time setting. The fact that these two frameworks are not equivalent is also why for the spatial search algorithm, efforts to prove a general quadratic speedup over classical random walks were undertaken in parallel.\\
Apart from being a fundamental contribution in the field of quantum walks, our work introduces new tools that are of independent interest and could lead to the development of novel analog quantum algorithms. In particular, our continuous-time procedure to access the quantum state $e^{-tH^2}\ket{\psi_0}$ naturally translates to a quantum algorithm to prepare the ground states of $H$. This procedure has a running time that matches the best quantum algorithms in the circuit model but is considerably simpler. We believe that similar techniques can lead to other analog quantum algorithms for optimization such as simulated annealing and Gibbs state preparation.

\textit{\textbf{Imaginary time-evolution --}} We begin by describing a technique of broader relevance, which uses a continuous-time procedure to prepare the state $e^{-tH^2}\ket{\psi}$, which can be seen as an imaginary time evolution under Hamiltonian $H^2$ for time $t$. In the spatial search setting, we will use this result to fast-forward the dynamics of a continuous-time random walk. Consider a quantum system in state $\ket{\psi_0}$ coupled to an ancillary system in a Gaussian state
\begin{equation}
\ket{\psi_g}=\int_{-\infty}^{+\infty} \dfrac{dz}{(2\pi)^{1/4}} e^{-z^2/4}\ket{z}.
\end{equation}
Choosing suitable units of mass and frequency, this state can be seen as the ground state of a one-dimensional quantum harmonic oscillator. The coupling is done via interaction Hamiltonian $H'=H\otimes\hat{z}$, where $\hat{z}$ corresponds to the position operator. Evolving $\ket{\psi_0}\ket{\psi_g}$, under $H'$ for a time $\sqrt{2t}$ results in the state
\begin{align}
\ket{\eta_t}&=e^{-i\sqrt{2t}H'}\ket{\psi_0}\ket{\psi_g} \nonumber \\
			\label{eqmain:evolved-state-fast-forwarding-1}
			 &=\int_{-\infty}^{+\infty}\dfrac{dz}{(2\pi)^{1/4}}e^{-z^2/4}e^{-i\sqrt{2t}Hz}\ket{\psi_0}\ket{z} \\		 
			 &=\int_{-\infty}^{+\infty}\dfrac{dz}{\sqrt{2\pi}}e^{-z^2/2}e^{-i\sqrt{2t}Hz}\ket{\psi_0}\ket{\psi_g}+\ket{\Phi}^\perp, \nonumber
\end{align}
where $\ket{\Phi}^\perp$ is a quantum state where the ancillary system is orthogonal to $\ket{\psi_g}$. Now we can make use of the Hubbard-Stratonovich transformation which states that, for any $x \in \mathbb{R}$,
\[
e^{-x^2/2}=\int \dfrac{dy}{\sqrt{2\pi}}~e^{-y^2/2}e^{-ixy}.
\]  
If we choose $x=\sqrt{2t}H$, it allows us to rewrite
\begin{equation}\label{eqmain:evolved-state-fast-forwarding-3}
    \ket{\eta_t}=e^{-tH^2}\ket{\psi_0}\ket{\psi_g}+\ket{\Phi}^\perp.
\end{equation}
By post-selecting on obtaining $\ket{\psi_g}$ in the second register (for example, by measuring the second register in the eigenbasis of the quantum harmonic oscillator), we are able to prepare a quantum state proportional to $e^{-tH^2}\ket{\psi_0}$ in the first register.\\
This simple analog procedure in itself can have several applications. For example, consider the problem of preparing the ground state of $H$ (say $\ket{v_0}$). If $\Delta$ is the gap between the ground state and the first excited state of $H$, then it can be shown that the state $e^{-tH^2}\ket{\psi_0}$, for $\sqrt{t}$ scaling as $\Oo\left(\frac{1}{\Delta}\log^{1/2}\left(\frac{1}{ |\braket{\psi_0|v_0}|\eps}\right)\right)$ \footnote{We use the Big-Oh notation, i.e.\ $f(n)\in \mathcal{O}(g(n))$ if there exists a constant $c$ such that $f(n)\leq c g(n)$. Any $f(n)\in\Omega(g(n))$ if there exists a constant $c$ such that $f(n)\geq c g(n)$. Finally $f(n)\in \Theta(g(n))$, if there exists constants $c_1,c_2$ such that $c_1~g(n)\leq f(n)\leq c_2~g(n)$.}, is $\eps$-close to $\ket{v_0}$ with probability scaling as $|\braket{\psi_0|v_0}|^2$ (See Appendix for details). The performance matches the best circuit model algorithms for this problem but is conceptually simpler as unlike in the discrete-time setting, does not require implementing linear combination of unitaries or quantum phase estimation \cite{abrams1999quantum,ge2019faster, lin2020near}.\\
We believe that our approach can also lead to potentially simpler analog quantum algorithms for preparing Gibbs states, as in \cite{chowdhury2017quantum}.\\

\textit{\textbf{Dropping the ancilla register --}} For several quantum algorithmic applications, we are only interested in the projection of the state $e^{-tH^2}\ket{\psi_0}$ into some subspace of interest. This is typically the case for simulated annealing, where we are only interested in finding a (close to) optimal solution. Similarly it holds for the spatial search problem, wherein $H$ corresponds to the quantum walk Hamiltonian, and we want the final state to have a good overlap with the marked nodes. In such scenarios, we will show that it is possible to get rid of the ancilla register altogether: it suffices to evolve according to the Hamiltonian $H$ for some appropriately random time.
We formalize this in the following lemma.
\begin{lemma}
\label{lem:randomized-time-evolution}
\textit{Evolving a quantum state $\ket{\psi_0}$ under a Hamiltonian $H$ for time $\sqrt{2t}z$, where $z\sim \mathcal{N}(0,1)$ is drawn from the standard normal distribution, results in a mixed state
\[
\rho_t=\int_{-\infty}^{+\infty} \dfrac{dz}{\sqrt{2\pi}}e^{-z^2/2}e^{-i\sqrt{2t}Hz}\ket{\psi_0}\bra{\psi_0}e^{i\sqrt{2t}Hz},  
\]
such that for any projector} $\Pi$, 
$$\tr[\Pi\rho_t]\geq \bra{\psi_0}e^{-tH^2}\Pi e^{-tH^2}\ket{\psi_0}.$$
\end{lemma}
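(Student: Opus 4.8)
The plan is to read off $\rho_t$ as the reduced state of the coupled system $\ket{\eta_t}$ constructed in Eqs.~\eqref{eqmain:evolved-state-fast-forwarding-1}--\eqref{eqmain:evolved-state-fast-forwarding-3}, and then to exploit the orthogonality structure already established there. First I would verify that the $\rho_t$ defined in the statement is exactly $\Tr_{\mathrm{anc}}[\ket{\eta_t}\bra{\eta_t}]$: starting from the position representation in Eq.~\eqref{eqmain:evolved-state-fast-forwarding-1} and using $\braket{z'|z}=\delta(z-z')$, the partial trace collapses the double integral over $z,z'$ to a single Gaussian integral with weight $e^{-z^2/2}/\sqrt{2\pi}$, reproducing $\rho_t$. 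This matches the physical description in the lemma, since drawing $z\sim\mathcal{N}(0,1)$ and evolving under $H$ for time $\sqrt{2t}z$ produces the pure state $e^{-i\sqrt{2t}Hz}\ket{\psi_0}$, and averaging over the unknown $z$ gives precisely $\rho_t$. Consequently $\tr[\Pi\rho_t]=\bra{\eta_t}(\Pi\otimes I)\ket{\eta_t}$, where the identity acts on the ancilla.

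Next I would substitute the decomposition $\ket{\eta_t}=e^{-tH^2}\ket{\psi_0}\ket{\psi_g}+\ket{\Phi}^\perp$ from Eq.~\eqref{eqmain:evolved-state-fast-forwarding-3} and expand $\bra{\eta_t}(\Pi\otimes I)\ket{\eta_t}$ into four terms. The key simplification is that $\Pi\otimes I$ acts trivially on the ancilla, so each of the two cross terms factors through the ancilla inner product between $\ket{\psi_g}$ and the ancilla component of $\ket{\Phi}^\perp$; this overlap vanishes by the defining property of $\ket{\Phi}^\perp$, leaving only the two diagonal terms.

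Finally I would evaluate the survivors. The term coming from $e^{-tH^2}\ket{\psi_0}\ket{\psi_g}$ factorizes as $\bra{\psi_0}e^{-tH^2}\Pi e^{-tH^2}\ket{\psi_0}\cdot\braket{\psi_g|\psi_g}$, which equals the right-hand side of the claim since $\ket{\psi_g}$ is normalized. The remaining term $\bra{\Phi}^\perp(\Pi\otimes I)\ket{\Phi}^\perp$ is nonnegative because $\Pi\otimes I$ is a projector, hence positive semidefinite; dropping it yields the desired inequality. I expect the only delicate point to be arguing the vanishing of the cross terms carefully, keeping track of the fact that $\Pi$ acts only on the system register so the residual inner product is taken purely on the ancilla and collapses by orthogonality. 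Conceptually, the inequality is simply the statement that the probability of $\Pi$ accepting the discarded $\ket{\Phi}^\perp$ component is nonnegative; equivalently, decomposing $\Pi=\sum_j\ket{m_j}\bra{m_j}$ and applying the Hubbard--Stratonovich identity, it reduces per component to the nonnegativity of the variance $\Exp_z[|f(z)|^2]\geq|\Exp_z[f(z)]|^2$ with $f(z)=\bra{m_j}e^{-i\sqrt{2t}Hz}\ket{\psi_0}$.
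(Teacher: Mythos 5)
Your proposal is correct and follows essentially the same route as the paper: identify $\tr[\Pi\rho_t]$ with $\bra{\eta_t}(\Pi\otimes I)\ket{\eta_t}$ via the partial trace, expand using the decomposition of Eq.~\eqref{eqmain:evolved-state-fast-forwarding-3}, kill the cross terms by ancilla orthogonality, and drop the nonnegative $\bra{\Phi^\perp}(\Pi\otimes I)\ket{\Phi^\perp}$ term. The extra details you supply (the delta-function collapse of the double integral and the variance interpretation) are sound elaborations of steps the paper leaves implicit.
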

\begin{proof}
It can be seen that 
\begin{align}
\tr[\Pi\rho_t]&= \tr\left[\left(\Pi\otimes I\right)\ket{\eta_t}\bra{\eta_t}\right] \nonumber \\
&=\bra{\psi_0}e^{-tH^2}\Pi e^{-tH^2}\ket{\psi_0}+ \bra{\Phi^\perp}\Pi\otimes I\ket{\Phi^\perp}, \nonumber
\end{align}
where the first equality follows from Eq.~\eqref{eqmain:evolved-state-fast-forwarding-1} and the second from Eq.~\eqref{eqmain:evolved-state-fast-forwarding-3}. The proof of the Lemma then follows from the fact that the last term in the expression for $\tr[\Pi\rho_t]$ is non-negative.
\end{proof}
The expected time required to obtain $\rho_t$ is simply $T=\sqrt{2t}\cdot \langle |z|\rangle$.
Since $z\sim \mathcal{N}(0,1)$ this is $T=2\sqrt{t/\pi}$.
We note that this lemma is an analog variant of an observation in \cite[Section 6]{apers2019unified}.
Next we explain how Lemma \ref{lem:randomized-time-evolution} can be used to fast-forward continuous-time random walk dynamics - a key ingredient of our spatial search algorithm.\\

\textit{\textbf{Fast-forwarding continuous-time random walks} --}
Before explaining the construction of the Hamiltonian driving the quantum walk, we introduce some notions about Markov chains and random walks. For details on some of these concepts, we refer the reader to the Appendix. A Markov chain with a set of $V$ nodes, such that $|V|=n$, is defined by an $n\times n$ stochastic matrix $P$, such that $P_{xy} = p_{xy}$, which is the probability of of transitioning from node $x$ to node $y$. The state of the random walker at any time is given by a stochastic row vector. A discrete-time random walk proceeds by repeated applications of $P$, starting from some initial state $v_0$: $t$ random walk steps result in the state $v_t=v_0P^t$.\\
To any discrete-time chain with transition matrix $P$, we can associate a continuous-time chain by choosing the \emph{transition kernel} $Q=P-I$. Starting from some distribution $v_0$, the state of this continuous-time walk after time $t$ is described by the distribution $v_t=v_0e^{Qt}$.\\ 
We assume that $P$ represents an ergodic and reversible Markov chain, implying (i) the eigenvalues of $P$ lie between $-1$ and $1$ and (ii) $P$ has a unique stationary distribution $\pi=(\pi_1~\pi_2~\cdots~\pi_n)$, which obeys the so-called detailed balance condition, $\pi_x p_{xy}=\pi_y p_{yx}$. Any reversible $P$ corresponds to a discrete-time random walk on a weighted graph. A useful matrix associated to $P$ is the discriminant matrix $D$, defined as an $n\times n$ symmetric matrix with entries $D_{xy}=\sqrt{p_{xy}p_{yx}}$. The discriminant matrix is symmetric, its spectrum coincides with that of $P$, and the coherent encoding of the stationary distribution $\ket{\sqrt{\pi}}=\sum_{j}\sqrt{\pi_j}\ket{j}$ is an eigenstate of $D$ with eigenvalue $1$.\\
The discriminant matrix provides a framework to deal with cases when $P$ is not symmetric and the spectral relationship between the two has been fundamental for designing many quantum walk algorithms. In fact, for symmetric $P$, $D=P$.\\
In particular, it is used in a canonical construction of a Hamiltonian $H_P$, associated to a reversible Markov chain $P$, so that $e^{-iH_Pt}$ corresponds to a quantum walk on the edges of the underlying graph for time $t$~\cite{somma2010quantum, chakraborty2020finding}.
A useful property is that the spectra of $H_P$ and $P$ are closely related.
To define $H_P$, consider the Hilbert space $\mathcal{H}$ spanned by the nodes of $P$ and an additional reference state $\ket{0}$.
We define (i) a unitary $U_P$ acting on $\mathcal{H}\otimes \mathcal{H}$ such that $U_P\ket{x,0}=\sum_{y\in V}\sqrt{p_{xy}}\ket{x,y}$, for any $x\in V$, and (ii) a swap operation $S$ defined by $S \ket{x,y} = \ket{y,x}$.
The CTQW Hamiltonian is then defined as the commutator $H_P=i[U_P^\dag S U_P,\Pi_0]$, where $\Pi_0=I\otimes\ket{0}\bra{0}$.\\
We refer the readers to Refs.~\cite{krovi2010adiabatic,chakraborty2020finding} wherein the spectral properties of $H_P$ have been analyzed in detail. Here, we shall rely only on an elegant relationship between $H_P$ and the discriminant matrix of the underlying Markov chain. Specifically, for any quantum state $\ket{\psi}\in \mathcal{H}$ it holds that
\begin{equation}
H^2_P\ket{\psi,0}=I-D^2\ket{\psi,0}. \nonumber
\end{equation}
We now combine this with Lemma \ref{lem:randomized-time-evolution}.
It follows that evolving any state $\ket{\psi_0}=\ket{\psi,0}$ under $H_P$ for a time $T=2\sqrt{t}z$, where $z\sim \mathcal{N}(0,1)$, results in a mixed state $\rho_t$ such that
\begin{equation}
\tr[\Pi\rho_t]\geq \bra{\psi_0}e^{t\left(D^2-I\right)}\Pi e^{t\left(D^2-I\right)}\ket{\psi_0}. \nonumber
\end{equation}
Note that $e^{D^2-I}$ has the same spectrum as $e^{P^2-I}$ which is the continuous-time random walk transition matrix obtained from $P^2$, corresponding to two steps of the discrete-time random walk. The previous inequality can be interpreted as follows: if we want to find a marked node by measuring $e^{t\left(D^2-I\right)}\ket{\psi_0}$, then we may instead measure $\rho_t$, for which the probability of finding a marked node is at least as large. As the expected time to prepare $\rho_t$ is $\langle |T| \rangle = 2\sqrt{t/\pi}$, this can be seen as a way to fast-forward a continuous-time random walk, in analogy to the discrete-time quantum fast-forwarding scheme introduced in Ref~\cite{apers2019quantum}. This technique is a crucial ingredient for the proof of a quadratic speedup for spatial search. We will show that when $t$ scales as the hitting time of the classical walk, and we make a suitable choice of $D$, then a marked node is found with large probability by measuring $\rho_t$.\\

\textit{\textbf{Spatial search by CTQW}--} In the classical scenario, the problem of searching marked nodes by random walk can be stated as follows. Given a reversible Markov chain $P$ with $|V|=n$ nodes, suppose $M\subset V$ corresponds to the set of marked nodes. Starting from its stationary distribution $\pi$, a random walk search algorithm repeatedly applies a step of the walk, while checking whether the new node is marked. The expected time required by the random walk to find some node $x\in M$ is defined as the hitting time $HT(P,M)$. The classical algorithm stops as soon as a marked node is reached.
As such, it be seen as the repeated application of the absorbing walk $P'$, obtained from $P$ by replacing all the outgoing edges from nodes in $M$ with self-loops.\\
The original approach by Szegedy to obtain speed-ups for spatial search problems used a quantum walk operator based on from $P'$ \cite{szegedy2004quantum}.
However, improvements over this approach made use of the framework of \emph{interpolated Markov chains} \cite{krovi2010adiabatic, krovi2016quantum,ambainis2019quadratic,chakraborty2020optimality}, and we will also use this idea in our work.
For an ergodic reversible Markov chain $P$ and absorbing walk $P'$, the interpolated random walk operator is given by $P(s)=(1-s)P+sP'$, where $s\in [0,1]$. Intuitively, at any timestep the interpolated random walk first checks whether the current node is marked and (i) if it is not, the original walk operator $P$ is applied, while (ii) if it is, then $P$ is applied with probability $(1-s)$ and otherwise $P'$ is applied (i.e., the walk remains at the node). The advantage with using this framework is that for appropriate choice of $s$, the interpolated walk $P(s)$ demonstrates the useful properties of both the original Markov chain $P$ as well as the absorbing chain $P'$.
E.g.,  Krovi et al.~\cite{krovi2016quantum} showed that if $P$ is ergodic and reversible, then so is $P(s)$ for any $s\in[0,1)$. So $P(s)$ has a unique stationary distribution $\pi(s)$ and its spectrum coincides with its discriminant matrix $D(s)$.\\
We now have all the necessary ingredients to introduce our quantum search algorithm. Similar to prior works, we assume we have a procedure to prepare the state $\ket{\sqrt{\pi}}$, as well as a procedure to measure if a node is marked or not. We define $\Pi_M$ as the projective measurement onto the set of marked nodes $M$. The first step of the algorithm is to prepare $\ket{\sqrt{\pi}}$ and perform a projective measurement associated to operators $\{\Pi_M, I-\Pi_M\}$.
Either this returns a marked node, in which case we are done, or it returns the post-measurement state $\ket{\sqrt{\pi_U}}=\sum_{j \in U} \sqrt{\pi_j}\ket{j}$, where $U = V \setminus M$ denotes the set of nodes that are not marked.\\
The next step is to evolve $\ket{\sqrt{\pi_U}}$ according to the Hamiltonian $H_{P(s)}$, for some $s\in [0,1)$. More specifically, we use a random time evolution so as to prepare the state $\rho_t$ as in Lemma~\ref{lem:randomized-time-evolution}. Using this Lemma, we obtain that the probability of observing a marked node by measuring $\rho_t$ is $\Tr[(\Pi_M\otimes I)\rho_t]\geq \nrm{\Pi_M e^{(D^2(s)-I)t}\ket{\sqrt{\pi_U}}}^2$. Now recall that $\rho_t$ can be prepared in $O(\sqrt{t})$ time.
Hence, this results in a quadratic speedup over classical random walks if we can choose some $t$ scaling as $HT(P,M)$, and some $s\in [0,1)$, so that the success probability is large. While the optimal choice of $s$ seems hard to determine, we show that a large success probabilities can be obtained by choosing an appropriately random $s$, similar to Ref.~\cite{ambainis2019quadratic}. This essentially follows from the following lemma.
\begin{lemma}
\label{lem:success-probability-lower-bound}
\textit{Consider an ergodic, reversible Markov chain $P$ and a set of marked nodes $M$. If we choose parameters $s\in \{1-1/r: r=1,2,\cdots, 2^{\lceil\log T\rceil}\}$ and $T\in \Theta\left(HT(P,M)\right)$ uniformly at random, then the continuous-time random walk operator $e^{D^2(s)-I}$ satisfies}
\begin{equation}
\mathbb{E}\left[\nrm{\Pi_Me^{(D^2(s)-I)T}\ket{\sqrt{\pi_U}}}^2\right]\in \Omega\left(1/\log^2 T\right). \nonumber
\end{equation} 
\end{lemma}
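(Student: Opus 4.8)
The plan is to read the quantity $\nrm{\Pi_M e^{(D^2(s)-I)T}\ket{\sqrt{\pi_U}}}^2$ as the amplitude that the reversible interpolated walk transfers from the unmarked set $U$ into the marked set $M$, and then to extract the bound from a dyadic pigeonhole over the spectral scales of the walk. I would work in the eigenbasis of $D(s)$, whose spectrum coincides with that of $P(s)$: there is a unique top eigenvalue $1$ with eigenvector $\ket{\sqrt{\pi(s)}}$, and $e^{(D^2(s)-I)T}$ fixes this stationary mode while damping every other eigenmode of eigenvalue $\lambda$ by $e^{-(1-\lambda^2)T}$. A preliminary reduction: if $p_M=\sum_{x\in M}\pi_x=\Omega(1)$, then the very first measurement of $\ket{\sqrt{\pi}}$ already returns a marked node with constant probability, so I may assume $p_M$ is small and concentrate on the genuinely transient transfer.

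First I would record the block structure of $D(s)$ in the ordering $(U,M)$,
\[ D(s)=\begin{pmatrix} D_{UU} & \sqrt{1-s}\,D_{UM}\\ \sqrt{1-s}\,D_{MU} & sI_M+(1-s)\tilde D_{MM}\end{pmatrix}, \]
together with $\pi(s)_x=\pi_x/(1-sp_M)$ on $U$ and $(1-s)\pi_x/(1-sp_M)$ on $M$. The essential features are that the $U$-block is the $s$-independent Dirichlet submatrix $D_{UU}$, with eigenpairs $(\mu_j,\ket{w_j})$ and overlaps $\beta_j=\braket{w_j|\sqrt{\pi_U}}$, and that the marked block sits near eigenvalue $s$ with a coupling of order $\sqrt{1-s}$. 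Consequently a slow $U$-mode with $\mu_j\approx 1-1/\tau_j$ feeds its weight into $M$ only near resonance, $1-s\asymp 1/\tau_j$, i.e.\ $r=1/(1-s)\asymp\tau_j$. The bridge to the classical cost is the survival identity for the absorbing walk, $HT(P,M)\asymp\int_0^\infty S(t)\,dt=\sum_j\beta_j^2\,\tau_j$ with $S(t)=\sum_j\beta_j^2 e^{-t/\tau_j}$ and $\tau_j=1/(1-\mu_j^2)\asymp 1/(1-\mu_j)$ (the square in $D^2$ only rescales $\tau_j$ by a constant).

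Bucketing the modes into dyadic bands $\tau_j\in[2^\ell,2^{\ell+1})$, a Markov estimate from $\sum_j\beta_j^2\tau_j\asymp HT$ shows that the \emph{admissible} bands with $\tau_j\lesssim HT$ already account for an $\Omega(1)$ fraction of $HT$ (over-slow modes are forced to carry negligible weight). Since there are only $O(\log T)$ admissible scales, the admissible band $\ell^*$ of largest contribution satisfies $2^{\ell^*}W_{\ell^*}\gtrsim HT/\log T$, where $W_{\ell^*}=\sum_{j\in\ell^*}\beta_j^2$; as $W_{\ell^*}\le 1$ this forces $2^{\ell^*}\gtrsim HT/\log T$, while $2^{\ell^*}\lesssim HT$ forces $W_{\ell^*}\gtrsim 1/\log T$. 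I would then show that tuning $r\asymp 2^{\ell^*}$ and taking $T\in\Theta(HT)$ — ample time to complete the transfer since $2^{\ell^*}\lesssim HT$ — deposits a constant fraction of $W_{\ell^*}$ onto $M$, so that $f(s,T):=\nrm{\Pi_M e^{(D^2(s)-I)T}\ket{\sqrt{\pi_U}}}^2=\Omega(1/\log T)$ at this scale. Finally, drawing $r$ uniformly from $\{1,\dots,2^{\lceil\log T\rceil}\}$ hits the resonant window $[2^{\ell^*},2^{\ell^*+1})$ with probability $2^{\ell^*}/2^{\lceil\log T\rceil}\asymp 2^{\ell^*}/HT\gtrsim 1/\log T$, so $\mathbb{E}[f]\ge\Pr[r\ \text{resonant}]\cdot\Omega(1/\log T)=\Omega(1/\log^2 T)$, as claimed. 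The two logarithmic factors thus arise respectively from the pigeonhole over dyadic bands and from the probability of sampling the resonant interpolation scale.

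The hard part is the per-band transfer estimate. Because the grid reaches only $r\lesssim HT$, the set $M$ can never be made perfectly absorbing over the full evolution time $T\asymp HT$, so the analysis must control the competition between transfer of the near-resonant slow modes into $M$ and relaxation back toward the interpolated stationary state $\ket{\sqrt{\pi(s)}}$, which carries only $O(p_M)$ marked weight; at the tuned scale $1-s\asymp 1/2^{\ell^*}$ and $T\asymp HT$ the leak-back is of order one and should cost only a constant factor, but making this quantitative requires care. In addition, since $\Pi_M$ does not commute with $D(s)$, the off-diagonal cross terms $\braket{v_k|\Pi_M|v_{k'}}$ in the spectral expansion of $f(s,T)$ must be shown not to cancel the diagonal contribution. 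Establishing this transfer bound uniformly over the dyadic scales — essentially a resolvent/perturbative analysis of the weakly coupled two-block operator $D(s)$ — is where the real work lies; the outer pigeonhole and averaging are comparatively routine once it is in hand.
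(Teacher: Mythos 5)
Your outer scaffolding (the dyadic pigeonhole over spectral scales, the resonance condition $1-s\asymp 1/\tau_j$, and the $2^{\ell^*}/2^{\lceil\log T\rceil}$ sampling probability) is a plausible heuristic for where the two logarithms come from, but the proof has a genuine gap exactly where you flag it: the per-band transfer estimate is asserted, not established, and it is not a routine technicality — it is essentially the entire content of the lemma. To make it work you would need to show that for $1-s\asymp 2^{-\ell^*}$ and $T\asymp HT$, a constant fraction of the weight $W_{\ell^*}$ carried by the near-resonant eigenmodes of the Dirichlet block $D_{UU}$ ends up in $\Pi_M e^{(D^2(s)-I)T}\ket{\sqrt{\pi_U}}$. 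This requires controlling the hybridization between the $U$-modes and the $M$-block of $D(s)$ (which for $|M|>1$ has its own nontrivial spectrum near $s$, so "resonance" is not a single avoided crossing but a many-level interaction), the leak-back toward $\ket{\sqrt{\pi(s)}}$, and the off-diagonal terms $\braket{v_k|\Pi_M|v_{k'}}$ that could cancel the diagonal contribution. This kind of direct spectral analysis is known to go through when $|M|=1$ (the $M$-block is one-dimensional), but for multiple marked nodes no one has carried it out, and its difficulty is precisely why the general quadratic speedup was open for so long. As written, the argument assumes the conclusion at the tuned scale.

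The paper sidesteps this entirely by never doing spectral analysis of $D(s)$. It first lower-bounds $\|\Pi_M e^{t(D^2(s)-I)}\ket{\sqrt{\pi_U}}\|$ by the purely classical probability $\Pr_{\pi_U}(X_t\in M,\,X_{t+t'}\notin M)$ for the continuous-time chain with kernel $P^2(s)-I$, using Cauchy--Schwarz and the conjugation $D=\diag(\sqrt{\pi})P\diag(\sqrt{\pi})^{-1}$; it then reduces the continuous-time probability to the corresponding discrete-time one via a trajectory coupling (exponential waiting times, Markov's inequality and Paley--Zygmund), and a laziness argument to pass from $P^2(s)$ to $P(s)$; finally it invokes the already-proved discrete-time bound of Ambainis et al.\ as a black box. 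In other words, the paper imports the hard combinatorial fact rather than reproving it, whereas your route would require reproving that central result from scratch by perturbative/resolvent means. If you want to salvage your approach, the honest statement is that the resonant-transfer bound is an open sublemma of comparable difficulty to the theorem itself; otherwise the reduction route is the one that closes.
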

Here, we summarize the main technical steps of the proof, while the details are provided in the Appendix. First, we show that the probability of finding a marked state by measuring the (sub-normalized) quantum state  
$e^{(D^2(s)-I)T}\ket{\sqrt{\pi_U}}$ can be lower bounded by a purely classical quantity. This quantity is the probability of a specific event happening in a continuous-time random walk with walk operator $e^{(P^2(s)-I)T}$. Then, through some non-trivial reductions, we show that this probability on its turn can be lower bounded by the probability of the same event happening in a discrete-time walk with walk operator $P(s)$. This reduction to a problem involving discrete-time walks allows us to use the central result of Ref.~\cite{ambainis2019quadratic} to finally prove Lemma~\ref{lem:success-probability-lower-bound}.\\
The final CTQW search algorithm (Algorithm \ref{algo:spatial-search-algorithm}) is described below.\\
\RestyleAlgo{boxruled}
\begin{algorithm}[ht]
  \caption{Quantum spatial search by CTQW}\label{algo:spatial-search-algorithm}
  ~Pick some $s\in \{1-1/r: r=1,2,\cdots, 2^{\lceil\log T\rceil}\}$ and $T\in \Theta\left(HT(P,M)\right)$ uniformly at random.
  \begin{itemize}
  \item[1.~] Prepare the state $\ket{\sqrt{\pi}}\ket{0}$.\\
  \item[2.~] Perform a measurement using $\{\Pi_M,I-\Pi_M\}$ on the first register. If the output is marked, measure in the node basis to obtain $x\in M$ and output $x$. Otherwise we are left with the state $\ket{\sqrt{\pi_U}}\ket{0}$. \\
  \item[3.~] Evolve $\ket{\sqrt{\pi_U}}\ket{0}$, under $H_{P(s)}$ for time $\sqrt{2T}z$, where $z\sim\mathcal{N}(0,1)$, to prepare the state $\rho_t$.\\
  \item[4.~] Perform a measurement on $\rho_t$ in the node basis on the first register. 
  \end{itemize}
\end{algorithm}
\\
By combining Lemma~\ref{lem:randomized-time-evolution} and Lemma~\ref{lem:success-probability-lower-bound}, we see that Algorithm \ref{algo:spatial-search-algorithm} runs in expected time $\Theta(\sqrt{HT(P,M)})$, and has a success probability of at least $1/\log^2 \left(HT(P,M)\right)$. Thus by repeating it $\log^2 \left(HT(P,M)\right)$ times, we obtain a marked node $x\in M$ with a constant probability.
This yields an overall expected runtime of
$$T=\Theta\left(\sqrt{HT(P,M)}\log^2\left(HT(P,M)\right)\right).$$

\textbf{\textit{Discussion --}} We have presented a CTQW search algorithm that achieves a general quadratic speedup over classical random walks on an arbitrary graph, even when multiple nodes are marked. This solves a long-standing open problem in the field. The running time of our algorithm matches that of a recent breakthrough DTQW algorithm for this problem~\cite{ambainis2019quadratic}.\\
This equivalence in performance between DTQWs and CTQWs for the spatial search problem raises the question of whether more general equivalences can be proven between these different frameworks. In particular, it is currently unknown if any DTQW algorithm or, more generally, any dynamics generated by DTQW can be simulated by a CTQW (although the reverse holds owing to Childs' reduction \cite{childs2010relationship}). It would be interesting to explore whether such an equivalence can be obtained: solving this may bring new insights into the power of simple analog models of computation and inspire new analog quantum algorithms.\\
Indeed, the development of our CTQW algorithm for spatial search has led to the development of tools of independent interest. For any Hamiltonian $H$, we provide a simple continuous-time procedure to prepare the state $e^{-tH^2}\ket{\psi_0}$. This translates to an analog quantum algorithm for preparing the ground states of $H$ with a running time that matches the best known quantum algorithms in the circuit model. For several physical systems, preparing ground states of Hamiltonians is of extreme interest as they contain information about interesting phases of matter and are also useful to solve optimization problems \cite{lucas2014ising}. In this regard, our work could also lead to other simple analog quantum algorithms for optimization such as quantum simulated annealing \cite{somma2008quantum} and preparation of thermal states \cite{chowdhury2017quantum}.\\~\\
\textit{Note: After the completion of this article, we became aware of recent Refs.~\cite{keen2021quantum} and~\cite{he2021quantum}. In~\cite{keen2021quantum}, the authors have applied a discretized version of the Hubbard-Stratonovich transform (using linear combination of unitaries) to prepare the ground state of a Hamiltonian in the circuit model. Ref.~\cite{he2021quantum} explores ideas similar to ours to develop analog near-term quantum algorithms for ground state preparation. Neither of these references consider the spatial search problem.}
\begin{acknowledgments}
SA acknowledges funding from QuantERA ERA-NET Cofund project QuantAlgo. SC acknowledges support from the Faculty Seed Grant, IIIT Hyderabad. LN acknowledges support from F.R.S.- FNRS. JR is supported by the Belgian Fonds de la Recherche Scientifique - FNRS under grant no  R.50.05.18.F (QuantAlgo). 
\end{acknowledgments}
 
\bibliography{bibliography}
\bibliographystyle{unsrt}
\widetext
\clearpage
\begin{center}
\textbf{\large Appendix for Quadratic speedup for spatial search by continuous-time quantum walk}
\end{center}
\setcounter{equation}{0}
\setcounter{figure}{0}
\setcounter{table}{0}
\setcounter{algocf}{0}
\makeatletter
\renewcommand{\theequation}{S\arabic{equation}}
\renewcommand{\thefigure}{S\arabic{figure}}
\renewcommand{\thelemma}{S\arabic{lemma}}
\renewcommand{\thealgocf}{S\arabic{algocf}}
\renewcommand{\thecorollary}{S\arabic{corollary}}
Here we elaborate on the proofs of the main article.
\section{Discrete and continuous-time Markov chains}
\label{subsec:basics-mc}
\textbf{Discrete-time Markov chains}:~ A sequence of random variables $(Y)=(Y_j)_{j=0}^{\infty}$, is a (discrete-time) Markov chain if for all $j>0$,
\begin{equation}
\Pr\left[Y_i=y_i|Y_0=y_0,Y_1=y_1,\cdots,Y_{i-1}=y_{i-1}\right]=\Pr\left[Y_i=y_i|Y_{i-1}=y_{i-1}\right]. \nonumber
\end{equation} 
A discrete-time Markov chain over a set of $V$ nodes, such that $|V|=n$, can be described by a $n\times n$ stochastic matrix $P$. Each entry $p_{xy}$ of this matrix $P$, known as the transition matrix, represents the probability of transitioning from state $x$ to state $y$.\\
A \textit{discrete-time random walk} on such a chain is defined as follows: Suppose $v_0$ is some distribution over $V$ and $Y_0$ is the random variable distributed according to $v_0$. Then $t$ steps of a discrete-time random walk results in a distribution $v_t=v_0P^t$ such that $Y_t$ is a random variable distributed according to $v_t$. Equivalently, $t$-steps of a discrete-time random walk corresponds to a sequence of random variables $(Y)=(Y_1,Y_2,\cdots,Y_t)$, such that each $Y_i$ is a random variable from the distribution $v_i$.\\
The probability that, starting from the distribution $v$, a discrete-time Markov chain $(Y)$ belongs to some subset $M$ of $V$ after $t$-steps will be denoted as $\Pr_{v}\left[Y_t\in M\right]$.\\~\\
\textbf{Properties of Markov chains:~}A Markov chain is \textit{irreducible} if any state can be reached from any other state in a finite number of steps. Any \textit{irreducible} Markov chain is \textit{aperiodic} if there exists no integer greater than one that divides the length of every directed cycle of the graph. A Markov chain is \textit{ergodic} if it is both \textit{irreducible} and \textit{aperiodic}. By the Perron-Frobenius Theorem, any ergodic Markov chain $P$ has a unique stationary state $\pi$ such that $\pi P=\pi$. The stationary state $\pi$ is a stochastic row vector and has support on all the nodes in $V$. Let us denote it as 
\begin{equation}
\pi=\left(\pi_1~~\pi_2~~\cdots~~\pi_n\right), \nonumber
\end{equation}
such that $\sum_{j=1}^n \pi_j=1$. For any ergodic Markov chain, $\pi$ is the unique eigenvector with eigenvalue $1$. All eigenvalues of $P$ lie between $-1$ and $1$. We shall concern ourselves with ergodic Markov chains that are also \textit{reversible}, i.e. Markov chains which satisfy the \textit{detailed balance condition} 
$$\pi_x p_{xy}=\pi_y p_{yx},~\forall (x,y)\in X.$$ 
This can also be rewritten as 
$$\text{diag}(\pi)P=P^T\text{diag}(\pi),$$
where $\text{diag}(\pi)$ is a diagonal matrix with the $j^{\text{th}}$ diagonal entry being $\pi_j$. In other words, the reversibility criterion implies that the matrix $\text{diag}(\pi)P$ is symmetric. Henceforth we shall only deal with reversible (and hence ergodic) Markov chains.
~\\~\\
\textbf{Discriminant matrix:~} The discriminant matrix of $P$ is a symmetric matrix $D$ such that the $(x,y)^{\mathrm{th}}$ entry of $D$ is $D_{xy}=\sqrt{p_{xy}p_{yx}}$. For any ergodic, reversible Markov chain $P$, 
\begin{equation}
D=\mathrm{diag}(\sqrt{\pi})P\mathrm{diag}(\sqrt{\pi})^{-1}, \nonumber
\end{equation}
where $\sqrt{\pi}$ is a row vector with its $j^{\text{th}}$-entry being $\sqrt{\pi_j}$. From this it is easy to see that $D$ and $P$ have the same set of eigenvalues. In particular, the eigenstate of $D$ with eigenvalue $1$ is given by
\begin{equation}
\ket{\pi}=\sum_{j=1}^n\sqrt{\pi_j}\ket{j}. \nonumber
\end{equation}
\textbf{Lazy random walk:} Often, it is convenient to add self-loops to every node of $P$, resulting in what is called the \textit{lazy walk}. In fact, we shall map $P\mapsto (I+P)/2$ such that $P_{ii}=1/2$ for $1\leq i\leq n$. This transformation ensures that for the Markov chain corresponding to the lazy walk, all its eigenvalues lie between $0$ and $1$. It will be easier to deal with lazy walks, which do not affect our results. For example, the hitting time is affected by a factor of two by this transformation.\\~\\
\textbf{Interpolated Markov chains:} Let $M\subset V$ denote the set of marked nodes of $P$. For any $P$, we define $P'$ as the \textit{absorbing Markov chain} obtained from $P$ by replacing all the outgoing edges from $M$ by self-loops. If we re-arrange the elements of $V$ such that the set of unmarked nodes $U:=V\backslash M$ appear first, then we can write 
\begin{align}
P=\begin{bmatrix}
P_{UU} & P_ {UM}\\
P_{MU} & P_{MM}
\end{bmatrix},~~~~~~~~~P'=\begin{bmatrix}
P_{UU} & P_ {UM}\\
0 & I
\end{bmatrix}, \nonumber
\end{align}
where $P_{UU}$ and $P_{MM}$ are square matrices of size $(n-|M|)\times (n-|M|)$ and $|M|\times |M|$ respectively. On the other hand $P_{UM}$ and $P_{MU}$ are matrices of size $(n-|M|)\times |M|$ and $|M|\times (n-|M|)$ respectively. 

An \textit{interpolated Markov chain}, $\left(X^{(s)}\right)$ corresponds to a transition matrix
\begin{equation}
P(s)=(1-s)P+sP', \nonumber
\end{equation}
where $s\in[0,1]$. $P(s)$ thus has a block structure
\begin{align}
P=\begin{bmatrix}
P_{UU} & P_ {UM}\\
(1-s)P_{MU} & (1-s)P_{MM}+sI
\end{bmatrix}. \nonumber
\end{align}
Clearly, $P(0)=P$ and $P(1)=P'$. Notice that if $P$ is ergodic, so is $P(s)$ for $s\in [0,1)$. This is because any edge in $P$ is also an edge of $P(s)$ and so the properties of \textit{irreducibility} and \textit{aperiodicity} are preserved. However when $s=1$, $P(s)$ has outgoing edges from $M$ replaced by self-loops and as such the states in $U$ are not accessible from $M$, implying that $P(1)$ is not ergodic. 

Now we shall see how the stationary state of $P$ is related to that of $P(s)$. Since $V=U \cup M$, the stationary state $\pi$ can be written as $\pi=(\pi_U~~\pi_M)$, where $\pi_U$ and $\pi_M$ are row-vectors of length $n-|M|$ and $|M|$ obtained by projecting the stationary distribution on to the marked subspace and the unmarked subspace, respectively. As mentioned previously, $P'$ is not ergodic and does not have a unique stationary state. In fact, any state having support over only the marked set is a stationary state of $P'$. On the other hand $P(s)$ is ergodic for $s\in [0,1)$. Let $p_M=\sum_{x\in M}\pi_x$ be the probability of obtaining a marked node in the stationary state of $P$. Then it is easy to verify that the unique stationary state of $P(s)$ is 
\begin{equation}
\pi(s)=\dfrac{1}{1-s(1-p_M)}\left((1-s)\pi_U~~\pi_M\right). \nonumber
\end{equation}
\textbf{Continuous-time Markov chains:} A continuous-time Markov chain is defined by a continuous sequence of random variables $(X)$. From a discrete-time chain with transition matrix $P$ we can derive a continuous-time chain by setting $Q = P - I$. If $P$ has no self-loops then the continuous-time chain simply follows by replacing every state of the discrete-time chain by an interval of length distributed according to an exponential random variable with rate $1$. If $P$ is lazy ($P_{ii} = 1/2$ for every $i$) then the rate of the exponential random variables is $1/2$.
This implies that
\begin{equation}
\Pr(X_t = x \mid X_0 = y)
= (e^{Q t})_{xy}, \nonumber
\end{equation}
which corresponds to the $(x,y)^{\text{th}}$-entry of the matrix $e^{tQ}$. We refer to $Q$ as the transition matrix of the continuous-time random walk on Markov chain $(X)$. So, a continuous-time random walk for some time $t$, starting from some distribution $v_0$ is simply $v_t=v_0 e^{Qt}$. 

Note that the random walk maps a distribution over $V$ to another distribution over $V$ as $Q$ is a matrix whose rows sum to $0$. Equivalently a continuous-time random walk on a graph for time $t$, corresponds to a continuous sequence of random variables over $V$, i.e. $X=(X_t)$. One can also define interpolated continuous-time Markov chains (or random walks) by considering $P(s)$: definitions translate naturally. Like in the discrete-time scenario, the probability that, starting from the distribution $v$ a continuous-time Markov chain $(X)$ belongs to some subset $M$ of $V$ after time $t$ will be denoted as $\Pr_{v}\left[X_t\in M\right]$.

Finally, it will be useful for us to consider the continuous-time random walk with transition matrix $Q=P^2-I$. This corresponds to the walk obtained from two steps of a discrete-time random walk.
\section{Proof of lemma 2}

We want to prove the following lemma (also stated in the main article):\\~\\

\textbf{Lemma 2.~}
\textit{Consider an ergodic, reversible Markov chain $P$ and a set of marked nodes $M$. If we choose parameters $s\in \{1-1/r: r=1,2,\cdots, 2^{\lceil\log T\rceil}\}$ and $T\in \Theta\left(HT(P,M)\right)$ uniformly at random, then the continuous-time random walk operator $e^{D^2(s)-I}$ satisfies}
\begin{equation}
\mathbb{E}\left[\nrm{\Pi_Me^{(D^2(s)-I)T}\ket{\sqrt{\pi_U}}}^2\right]\in \Omega\left(1/\log^2 T\right). \nonumber
\end{equation} 

We begin by relating the quantity $\nrm{\Pi_Me^{(D(s)^2-I)T}\ket{\pi_U}}^2$ to the behaviour of the original Markov chain $P(s)$ via the following lemma (which applies to any reversible Markov chain).

\begin{lemma} \label{lemsup:success-prob}
\textit{Consider a continuous-time Markov chain $(X)$ with transition matrix $P^2 - I$, where $P$ is a reversible Markov chain transition matrix with stationary distribution $\pi$ and discriminant matrix $D$.
Let $M$ be a set of marked nodes and $U$ be the set of non-marked nodes of $P$.
Then for any $t,t' \geq 0$ we have that
\[
\| \Pi_M e^{t(D^2-I)} \ket{\sqrt{\pi_U}} \|
\geq \Pr_{\pi_U}(X_t \in M, X_{t+t'} \notin M).
\]}
\end{lemma}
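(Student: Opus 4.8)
The plan is to turn the left-hand norm into a purely spectral object, rewrite the right-hand probability as an inner product, and then let Cauchy--Schwarz collapse that inner product back onto the norm. The starting point is the similarity relation $D=\diag(\sqrt\pi)\,P\,\diag(\sqrt\pi)^{-1}$ recalled in the Appendix. Since reversibility of $P$ implies reversibility of $P^2$, this exponentiates to $e^{t(D^2-I)}=\diag(\sqrt\pi)\,e^{t(P^2-I)}\,\diag(\sqrt\pi)^{-1}$, and because $\ket{\sqrt{\pi_U}}=\diag(\sqrt\pi)\ket{1_U}$ with $\ket{1_U}=\sum_{j\in U}\ket{j}$ the indicator of the unmarked set, it lets me convert the classical transition entries into entries of the symmetric matrix $A(t):=e^{t(D^2-I)}$ via $(e^{t(P^2-I)})_{jx}=\sqrt{\pi_x/\pi_j}\,A(t)_{jx}$.

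First I would expand the right-hand side using the Markov property at the intermediate time $t$. Writing $Q=P^2-I$, this gives
\[
\Pr_{\pi_U}(X_t\in M,\,X_{t+t'}\notin M)=\sum_{j\in U}\pi_j\sum_{x\in M}\sum_{y\in U}(e^{tQ})_{jx}\,(e^{t'Q})_{xy}.
\]
Substituting the discriminant expressions, the $\pi$-weights telescope: the prefactor $\pi_j\cdot\sqrt{\pi_x/\pi_j}\cdot\sqrt{\pi_y/\pi_x}$ reduces to $\sqrt{\pi_j}\sqrt{\pi_y}$, and using the symmetry $A(t)_{jx}=A(t)_{xj}$ the two inner sums become $(A(t)\ket{\sqrt{\pi_U}})_x$ and $(A(t')\ket{\sqrt{\pi_U}})_x$. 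Hence the whole expression is an honest inner product of two vectors supported on $M$,
\[
\Pr_{\pi_U}(X_t\in M,\,X_{t+t'}\notin M)=\bra{\sqrt{\pi_U}}\,A(t)\,\Pi_M\,A(t')\,\ket{\sqrt{\pi_U}}.
\]

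Finally, Cauchy--Schwarz bounds this by $\nrm{\Pi_M A(t)\ket{\sqrt{\pi_U}}}\cdot\nrm{\Pi_M A(t')\ket{\sqrt{\pi_U}}}$, so it remains to show the second factor is at most $1$. This is immediate from the spectrum: the eigenvalues of $D$ lie in $[-1,1]$, so $D^2-I\preceq 0$ and $A(t')=e^{t'(D^2-I)}$ is a contraction for $t'\ge 0$; combined with $\nrm{\ket{\sqrt{\pi_U}}}^2=\sum_{j\in U}\pi_j\le 1$ and $\nrm{\Pi_M}\le 1$ this yields $\nrm{\Pi_M A(t')\ket{\sqrt{\pi_U}}}\le 1$, and hence the claimed inequality.

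The genuinely clever step — and the one I expect to be the main obstacle — is the middle reorganization: recognizing that splitting the two-time event at $t$ via the Markov property, together with detailed balance (which is exactly what symmetrizes $e^{tQ}$ into $A(t)$ under the $\diag(\sqrt\pi)$ conjugation), turns the probability into a single inner product whose two factors both live on the marked subspace. Everything afterwards, namely Cauchy--Schwarz and the spectral contraction bound, is routine once this factorized form is in hand. I would also verify the index/transpose convention for $(e^{tQ})_{jx}$ against the stated definition of $\Pr(X_t=x\mid X_0=j)$; the symmetrization is precisely what makes that convention immaterial here.
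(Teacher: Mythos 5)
Your proposal is correct and follows essentially the same route as the paper's proof: the conjugation $e^{t(D^2-I)}=\diag(\sqrt{\pi})\,e^{t(P^2-I)}\,\diag(\sqrt{\pi})^{-1}$ identifies the probability with the inner product $\bra{\sqrt{\pi_U}}e^{t(D^2-I)}\Pi_M e^{t'(D^2-I)}\ket{\sqrt{\pi_U}}$, which Cauchy--Schwarz and the contraction bound $\|\Pi_M e^{t'(D^2-I)}\ket{\sqrt{\pi_U}}\|\leq 1$ reduce to the claimed norm. The only difference is presentational (you argue from the probability toward the norm, the paper argues in the opposite direction), and you share with the paper the same harmless $\pi(U)$ normalization convention in $\Pr_{\pi_U}$.
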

\begin{proof}
The proof is similar to \cite[Lemma 8]{ambainis2019quadratic}. Since $\| \Pi_M e^{t' (D^2-I)} \| \leq 1$ for any $t' \geq 0$, we can bound
\begin{align*}
\| \Pi_M e^{t(D^2-I)} \ket{\sqrt{\pi_U}} \|
&\geq \| \Pi_M e^{t(D^2-I)} \ket{\sqrt{\pi_U}} \| \, \| \Pi_M e^{t' (D^2-I)} \ket{\sqrt{\pi_U}} \| \\
&\geq \bra{\sqrt{\pi_U}} e^{t(D^2-I)} \Pi_M e^{t' (D^2-I)} \ket{\sqrt{\pi_U}},
\end{align*}
where the second inequality follows from Cauchy-Schwarz.
Using that $D = \diag(\sqrt{\pi}) P \diag(\sqrt{\pi})^{-1}$ we can rewrite this as
\begin{align}
\bra{\sqrt{\pi_U}} e^{t(D^2-I)} &\Pi_M e^{t' (D^2-I)} \ket{\sqrt{\pi_U}} \nonumber \\
&= \bra{\sqrt{\pi_U}} \diag(\sqrt{\pi}) e^{t(P^2-I)} \diag(\sqrt{\pi})^{-1} \Pi_M \diag(\sqrt{\pi}) e^{t' (D^2-I)} \diag(\sqrt{\pi})^{-1} \ket{\sqrt{\pi_U}} \nonumber \\
&= \bra{\pi_U} e^{t(P^2-I)} \Pi_M e^{t'(P^2-I)} \ket{1_U}
= \Pr_{\pi_U}(X_t \in M, X_{t+t'} \notin M), \nonumber
\end{align}
where we used the shorthands $\ket{\pi_U} = \frac{1}{\pi(U)} \sum_{x \in U} \pi(x) \ket{x}$ and $\ket{1_U} = \sum_{y \in U} \ket{y}$.
\end{proof}
Note that as $P(s)=(1-s)P+sP'$ is reversible for any $s\in[0,1)$, we can extend Lemma \ref{lemsup:success-prob} to any interpolated Markov chain. We shall now be working with this Markov chain quantity for interpolated Markov chains $(X^{(s)})$ with transition matrix $P^2(s) - I = ((1-s) P + s P')^2 - I$.\\
Next, we use a probabilistic argument to show that there exist some (in fact, many) choices of $t$, $t'$ and interpolation parameter $s$ for which the Markov chain quantity can be lower bounded. Specifically, fix some finite subset $R \subset [0,1)$ and a range $[0,T]$. If we pick an interpolation parameter $s \in R$ and times $t,t' \in [0,T]$ uniformly at random, then we can lower bound the success probability by
\begin{align}
\mathbb{E}_{s,t}\left[ \| \Pi_M e^{t((D{(s)})^2-I)} \ket{\sqrt{\pi_U}} \|^2 \right]
&\geq \mathbb{E}_{s,t,t'}\left[ \Pr_{\pi_U}(X^{(s)}_t \in M, X^{(s)}_{t+t'} \notin M)^2 \right] \nonumber \\
\label{eqsup:lower-bound-succ-prob}
&\geq \mathbb{E}_{s,t,t'}\left[ \Pr_{\pi_U}(X^{(s)}_t \in M, X^{(s)}_{t+t'} \notin M) \right]^2,
\end{align}
where we used Cauchy-Schwarz in the last inequality.
We will lower bound this by using the central technical lemma from Ref.~\cite{ambainis2019quadratic}, which is about \emph{discrete-time} random walks.
\begin{lemma}[Discrete-time, {\cite[Corollaries 11-12]{ambainis2019quadratic}}] \label{lem:amb}
\textit{Let $P$ be a reversible ergodic discrete-time Markov chain with stationary distribution $\pi$.
Let $(Y^{(s)})$ denote the discrete-time Markov chain with transition matrix $P(s)$.
If $\pi(M) \leq 1/9$ and $T \geq 3 \mathrm{HT}(M)$, and we choose interpolation parameter $s \in \{1 - 1/r \,:\, r = 1,2,\dots,2^{\lceil \log(12T)\rceil}\}$ and time $t,t' \in [24T]$ uniformly at random, then}
\[
\mathbb{E}_{s,t,t'}\left[ \Pr_{\pi_U}(Y^{(s)}_t \in M, Y^{(s)}_{t+t'} \notin M) \right]
\in \Omega(1/\log(T)).
\]
\end{lemma}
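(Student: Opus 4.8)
The plan is to prove the lemma directly for the interpolated discrete-time chain $Y^{(s)}$, treating the combined event ``in $M$ at time $t$, out of $M$ at time $t+t'$'' as a \emph{conditioned stationary} quantity of a reversible chain. The starting point is the observation that the distribution $\pi_U$ (the stationary distribution of $P$ conditioned on the unmarked set) coincides with the stationary distribution $\pi(s)$ of $P(s)$ conditioned on $U$: indeed the explicit form $\pi(s)=\tfrac{1}{1-s(1-p_M)}\big((1-s)\pi_U,\ \pi_M\big)$ shows that $\pi(s)$ restricted to $U$ is proportional to $\pi_U$. Hence
\[
\Pr_{\pi_U}\!\big(Y^{(s)}_t\in M,\ Y^{(s)}_{t+t'}\notin M\big)=\Pr_{\pi(s)}\!\big(Y^{(s)}_t\in M,\ Y^{(s)}_{t+t'}\notin M \,\big|\, Y^{(s)}_0\in U\big).
\]
This reframes the target as a quantity of the \emph{stationary} interpolated walk, which I can attack with detailed balance for $P(s)$ together with mixing and escape-time estimates.

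Next I would fix the interpolation scale. Writing $r=1/(1-s)$, the equilibrium mass on the marked set is exactly $\pi(s)(M)=r/(r+c)$ with $c=(1-p_M)/p_M\ge 8$ (using $p_M=\pi(M)\le 1/9$), an increasing function of $r$. I would range $r$ over the dyadic grid $\{1,2,4,\dots,2^{\lceil\log 12T\rceil}\}$, which has $\Theta(\log T)$ elements, and argue that there is a single good scale $r^\ast$ (interpolation parameter $s^\ast=1-1/r^\ast$) for which simultaneously (i) $\pi(s^\ast)(M)$ is bounded away from both $0$ and $1$, and (ii) started from $\pi_U$, the chain $P(s^\ast)$ equilibrates on the marked/unmarked partition within $O(T)$ steps while its typical dwell time inside $M$ is also $O(T)$. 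The hypothesis $T\ge 3\,\mathrm{HT}$ is what guarantees that such a scale lies in the grid, via the relation between the hitting time $\mathrm{HT}$ and the absorption timescale of the interpolated walk. Averaging over the $\Theta(\log T)$ scales is precisely what produces the $1/\log T$ factor.

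For the good scale, I would then lower bound the conditioned stationary quantity for a constant fraction of the pairs $(t,t')\in[24T]^2$. Decomposing the event into ``in $M$ at $t$'' and ``out of $M$ at $t+t'$'', the first factor is $\Omega(1)$ whenever $t$ exceeds the equilibration time, since $\Pr(Y^{(s^\ast)}_t\in M\mid Y_0\in U)\to\pi(s^\ast)(M)=\Theta(1)$; because $T=\Theta(\mathrm{HT})$, a constant fraction of $t\in[24T]$ are past this time. Conditioned on being in $M$ at $t$, the second factor is $\Omega(1)$ for all $t'$ in a good range -- long enough that the walk has left $M$ (exceeding the dwell time) but short enough that it has not re-entered -- and this range again covers a constant fraction of $[24T]$. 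Multiplying the two factors, invoking the reduction of the first paragraph, and inserting the $1/\log T$ cost of selecting $s^\ast$ from the grid yields $\mathbb{E}_{s,t,t'}[\,\cdot\,]\in\Omega(1/\log T)$.

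The hard part will be making the second step uniform over all reversible ergodic chains: I must exhibit one dyadic interpolation scale at which the walk both reaches $M$ with constant probability and escapes it within the window $[24T]$, no matter how $p_M$, the spectral gap, and $\mathrm{HT}$ are related. Controlling this requires the fine behaviour of the interpolated hitting time $\mathrm{HT}(s)$ as a function of $s$ and a genuine decoupling of the ``in at $t$'' and ``out at $t+t'$'' events rather than a naive product bound -- this is exactly the technical content of Corollaries 11--12 of Ref.~\cite{ambainis2019quadratic}, and it is also the reason the final bound carries a $\log T$ rather than being $\Omega(1)$.
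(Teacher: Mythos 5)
The first thing to note is that the paper does not prove this lemma at all: it is imported verbatim from Ref.~\cite{ambainis2019quadratic} (Corollaries 11--12), which is why the lemma carries that citation in its very name. Within this paper, the only "proof" is the citation itself. Your proposal, by contrast, tries to reconstruct the underlying argument, and its high-level architecture does match the shape of the argument in that reference: the conditioned-stationary reformulation using the fact that $\pi(s)$ restricted to $U$ is proportional to $\pi_U$, the dyadic grid in $r = 1/(1-s)$ of size $\Theta(\log T)$, the identification of a single good scale $r^\ast$, and the $1/\log T$ loss coming from guessing that scale uniformly. Your computation $\pi(s)(M) = r/(r+c)$ with $c = (1-p_M)/p_M \ge 8$ is also correct.

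As a proof, however, the proposal has a genuine gap, and you essentially concede it yourself: the existence of one dyadic scale $r^\ast$ at which, uniformly over all reversible ergodic chains, (i) the walk started from $\pi_U$ is in $M$ at a constant fraction of times $t \in [24T]$ and (ii) conditioned on that, it is outside $M$ at a constant fraction of times $t+t'$, is precisely the assertion to be proven; deferring it to Corollaries 11--12 of \cite{ambainis2019quadratic} makes the argument circular as a standalone proof. Moreover, the specific route you sketch for (i) --- "equilibration" of $\Pr_{\pi_U}(Y^{(s^\ast)}_t \in M)$ to $\pi(s^\ast)(M)$ within $O(T)$ steps --- would not go through as stated: the hypothesis $T \ge 3\,\mathrm{HT}(P,M)$ controls the hitting time of $M$, not the mixing behaviour of $P(s^\ast)$, and there is no general bound of the latter in terms of the former. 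The reference's actual argument avoids mixing altogether, working with absorption probabilities of the interpolated walk (hitting $M$ within the horizon, dwell times of order $r$ inside $M$, and escape without quick return), together with careful conditioning to handle the correlation between the two events --- exactly the "hard part" you flag but do not supply. So your outline correctly identifies the structure of the known proof, but it does not constitute one; within this paper the correct move is simply the citation.
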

Given Eq.~\eqref{eqsup:lower-bound-succ-prob} and Lemma \ref{lem:amb}, we can prove Lemma 2 by showing that
\begin{equation}
\label{eqsup:required-lower-bound-expected-prob}
\mathbb{E}_{s,t,t'}\left[\Pr_{\pi_U}(X^{(s)}_t \in M, X^{(s)}_{t+t'} \notin M)\right]\in \Omega\left(\mathbb{E}_{s,t,t'}\left[ \Pr_{\pi_U}(Y^{(s)}_t \in M, Y^{(s)}_{t+t'} \notin M) \right]\right),
\end{equation}
where $(X^{(s)})$ corresponds to the continuous-time Markov chain with transition kernel $P(s)^2-I$.
We arrive at Eq.~\eqref{eqsup:required-lower-bound-expected-prob} with the help of two lemmas. For the first lemma we prove the following:
\begin{lemma}
\label{lemsup:ctrw-dtrw}
\textit{Consider a reversible continous-time Markov chain $(X)$ with transition matrix $P - I$ and stationary distribution $\pi$.
Let $(Y)$ denote the discrete-time Markov chain with transition matrix $P$.
For any $T$ we have that}
\begin{equation}
\int_0^{40T} \int_0^{40T} \Pr_{\pi_U}(X_t \in M, X_{t+t'} \notin M) \dif t \, \dif t'
\in \Omega\left( \sum_{t,t' = 1}^T \Pr_{\pi_U}(Y_t \in M, Y_{t+t'} \notin M) \right). \nonumber
\end{equation}
\end{lemma}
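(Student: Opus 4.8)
The plan is to prove this via \emph{uniformization} (Poissonization): I would rewrite the continuous-time propagator $e^{t(P-I)}$ as a Poisson mixture of powers of the discrete-time transition matrix $P$, and then argue that integrating $t,t'$ over $[0,40T]$ recovers, up to a constant factor, the discrete double sum over $1\le k,k'\le T$. The crucial identity is the purely linear-algebraic expansion
\[
e^{t(P-I)}=e^{-t}\sum_{k\ge 0}\frac{t^{k}}{k!}P^{k}=\sum_{k\ge 0}\Pr(N_t=k)\,P^{k},\qquad N_t\sim\mathrm{Poisson}(t),
\]
which holds for every stochastic $P$ (so reversibility is not actually needed here). Writing the joint probability in operator form as in Lemma~\ref{lemsup:success-prob}, namely $\Pr_{\pi_U}(X_t\in M,X_{t+t'}\notin M)=\bra{\pi_U}e^{t(P-I)}\Pi_M e^{t'(P-I)}\ket{1_U}$, and using that a Poisson clock has independent increments on the disjoint intervals $[0,t]$ and $(t,t+t']$, expanding both exponentials gives
\[
\Pr_{\pi_U}(X_t\in M,X_{t+t'}\notin M)=\sum_{k,k'\ge 0}\Pr(N_t=k)\,\Pr(N_{t'}=k')\,\Pr_{\pi_U}(Y_k\in M,Y_{k+k'}\notin M),
\]
with $N_t\sim\mathrm{Poisson}(t)$ and $N_{t'}\sim\mathrm{Poisson}(t')$ independent.

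Since every summand is non-negative, Tonelli lets me integrate term by term, so
\[
\int_0^{40T}\!\!\int_0^{40T}\Pr_{\pi_U}(X_t\in M,X_{t+t'}\notin M)\,\dif t\,\dif t'=\sum_{k,k'\ge 0}a_k\,a_{k'}\,\Pr_{\pi_U}(Y_k\in M,Y_{k+k'}\notin M),
\]
where $a_k:=\int_0^{40T}e^{-t}\frac{t^{k}}{k!}\,\dif t$. Discarding every term outside $1\le k,k'\le T$ only decreases the right-hand side, leaving $\sum_{k=1}^{T}\sum_{k'=1}^{T}a_k\,a_{k'}\,\Pr_{\pi_U}(Y_k\in M,Y_{k+k'}\notin M)$. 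It then remains to show that $a_k$ is bounded below by an absolute constant on this range, after which the lemma follows immediately.

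The one genuinely quantitative step is this lower bound on $a_k$. By the standard Poisson--Gamma duality, $a_k=\Pr\!\big(\mathrm{Poisson}(40T)\ge k+1\big)$, and for $1\le k\le T$ the threshold $k+1\le T+1$ lies far below the mean $40T$; a Chernoff bound on the lower tail of a Poisson variable then gives $a_k\ge 1/2$ uniformly for all $1\le k\le T$ and all $T\ge 1$. The hard part is precisely this truncation --- controlling the lost Poisson mass at the largest relevant index $k=T$ --- and it is exactly the generous factor $40$ (rather than, say, $2$) in the integration window that makes the lower tail negligible and buys the constant. Substituting $a_k,a_{k'}\ge 1/2$ yields
\[
\int_0^{40T}\!\!\int_0^{40T}\Pr_{\pi_U}(X_t\in M,X_{t+t'}\notin M)\,\dif t\,\dif t'\ge \tfrac14\sum_{t,t'=1}^{T}\Pr_{\pi_U}(Y_t\in M,Y_{t+t'}\notin M),
\]
which is the claimed $\Omega(\cdot)$ bound.
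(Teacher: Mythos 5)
Your argument is correct, and it takes a genuinely different route from the paper's. You use uniformization: writing $e^{t(P-I)}=\sum_{k\ge0}\Pr(N_t=k)P^k$ with $N_t\sim\mathrm{Poisson}(t)$, so that $\Pr_{\pi_U}(X_t\in M,X_{t+t'}\notin M)$ becomes a Poisson mixture of the discrete-time quantities, and then the double integral reduces to lower-bounding the weights $a_k=\int_0^{40T}e^{-t}t^k/k!\,\dif t=\Pr(\mathrm{Poisson}(40T)\ge k+1)$, which are indeed at least $1/2$ (in fact $1-e^{-\Omega(T)}$) for all $k\le T$ by a standard Poisson lower-tail bound; discarding the nonnegative terms with $k>T$ or $k'>T$ is harmless. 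The paper instead works trajectory by trajectory: it fixes a discrete path $(\mathcal{Y}_t)$, attaches i.i.d.\ exponential holding times $\tau_t$, and shows via Markov's inequality (for the event that the first $2T$ holding times fit inside $[0,40T]$) and the Paley--Zygmund inequality (for the event that $\sum\tau_t\tau_{t+t'}I(\cdot)$ exceeds half its mean) that the two events co-occur with probability at least $1/80$, yielding the constant $1/160$. Your approach avoids the second-moment computation entirely, gives a better constant ($1/4$), and makes transparent that the window $40T$ is generous --- any $cT$ with $c>1$ would do for large $T$, since only the Poisson lower tail at threshold $T$ matters. Both arguments are per-initial-state and use no reversibility, so both extend verbatim to $P^2(s)$ as needed downstream. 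One small point worth making explicit if you write this up: the identity $\Pr_{\pi_U}(X_t\in M,X_{t+t'}\notin M)=\sum_{k,k'}\Pr(N_t=k)\Pr(N_{t'}=k')\Pr_{\pi_U}(Y_k\in M,Y_{k+k'}\notin M)$ uses both the independence of the Poisson clock from the jump chain and the independent-increments property on $[0,t]$ and $(t,t+t']$; equivalently, it follows purely algebraically by expanding both exponentials in $\langle\pi_U|e^{t(P-I)}\Pi_M e^{t'(P-I)}|1_U\rangle$, as you note, and the boundary terms $k=0$ or $k'=0$ vanish anyway since the chain starts in $U$.
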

\begin{proof}
By linearity of expectation it suffices to prove that
\begin{equation} \label{eq:from-v}
\int_0^{40T} \int_0^{40T} \Pr_v(X_t \in M, X_{t+t'} \notin M) \dif t \, \dif t'
\in \Omega\left( \sum_{t,t' = 1}^T \Pr_v(Y_t \in M, Y_{t+t'} \notin M) \right)
\end{equation}
for any $v \in V$.
Denote a single trajectory of the Markov chain $(Y_t)$ by the deterministic sequence $({\cal Y}_t)$, and let $\Pr_v(({\cal Y}_t))$ denote the probability of this trajectory, given that ${\cal Y}_0 = v$.
Then
\begin{equation} \label{eq:traj-DT}
\Pr_v(Y_t \in M, Y_{t+t'} \notin M)
= \sum_{\{({\cal Y}_t)\}} \Pr_v(({\cal Y}_t)) I(\cY_t \in M, \cY_{t+t'} \notin M),
\end{equation}
with $I(Z)$ the indicator function of event $Z$.
We can associate a continuous-time trajectory to $(\cY_t)$ by drawing a sequence of exponentially distributed time intervals $\{\tau_t\}$ of expected length 1, and replacing state $\cY_t$ by a time interval of length $\tau_t$.
For a given sequence $(\cY_t)$ we denote this trajectory by the random variable $(X^{\cY}_t)$.
We can now expand
\begin{equation} \label{eq:traj-CT}
\Pr_v(X_t \in M, X_{t+t'} \notin M)
= \sum_{\{({\cal Y}_t)\}} \Pr_v(({\cal Y}_t)) \Pr(X^\cY_t \in M, X^\cY_{t+t'} \notin M).
\end{equation}
Plugging \eqref{eq:traj-CT} and \eqref{eq:traj-DT} into \eqref{eq:from-v}, it suffices to prove that
\begin{equation}
\int_0^{40T} \int_0^{40T} \Pr_v(X^\cY_t \in M, X^\cY_{t+t'} \notin M) \dif t \dif t'
\in \Omega\left( \sum_{t,t'=1}^T I(\cY_t \in M, \cY_{t+t'} \notin M) \right), \nonumber
\end{equation}
for a single trajectory $({\cal Y}_t)$.

To do so, we will prove a union bound on the following two events:
\begin{itemize}
\item
$Z$ denotes the event that $\sum_{t=1}^{2T} \tau_t \leq 40T$, implying that all states $(\cY_t)_{t \leq T}$ appear as rescaled intervals in the relevant $[0,40T]$ interval of the random variable $(X^\cY_t)$.
\item
$W$ denotes the event that $\left\{ \sum_{t,t'=1}^T \tau_t \tau_{t+t'} \, I(\cY_t \in M, \cY_{t+t'} \notin M) \geq \frac{1}{2} \sum_{t,t'=1}^T I(\cY_t \in M, \cY_{t+t'} \notin M) \right\}$, i.e., the left quantity exceeds $1/2$ times its expectation.
\end{itemize}
Conditioning on these events we get the wanted bound:
\begin{align}
\int_0^{40T} \int_0^{40T} \Pr_v(X^\cY_t \in M, X^\cY_{t+t'} \notin M \mid Z \cap W) \dif t \dif t'
&= \Exp\left[ \sum_{t,t'=1}^T \tau_t \tau_{t+t'} \, I(\cY_t \in M, \cY_{t+t'} \notin M) \,\bigg|\, Z \cap W \right] \nonumber \\
&\geq \frac{1}{2} \sum_{t,t'=1}^T I(\cY_t \in M, \cY_{t+t'} \notin M). \nonumber
\end{align}
It remains to prove that the probability of $Z \cap W$ is large.
First, combining the fact that $\Exp(\sum_{t=1}^{2T} \tau_t) = 2T$ with Markov's inequality we indeed have that $\Pr(Z) \geq 19/20$.
Now define the random variable
\begin{equation}
A = \sum_{t,t'=1}^T \tau_t \tau_{t+t'} \, I(\cY_t \in M, \cY_{t+t'} \notin M). \nonumber
\end{equation}
We wish to bound $\Pr(W) = \Pr(A \geq \Exp[A]/2)$.
We know that $\Exp[A] = \sum_{t,t'=1}^T I(\cY_t \in M, \cY_{t+t'} \notin M)$ and we can bound
\begin{align}
\Exp[A^2]
&= \sum_{t,t',s,s'} \Exp[\tau_t \tau_{t+t'} \tau_s \tau_{s+s'}] \, I(\cY_t \in M, \cY_{t+t'} \notin M)\, I(\cY_s \in M, \cY_{s+s'} \notin M) \nonumber \\
&\leq 4 \left( \sum_{t,t'=1}^T I(\cY_t \in M, \cY_{t+t'} \notin M) \right)^2. \nonumber
\end{align}
To see this, notice that $\Exp[\tau_t \tau_{t+t'} \tau_s \tau_{s+s'}] \, I(\cY_t \in M, \cY_{t+t'} \notin M)\, I(\cY_s \in M, \cY_{s+s'} \notin M) \neq 0$ only if $t \neq t+t'$ and $s \neq s+s'$, and so the ``worst correlation'' shows up when $t = s \neq t+t' = s+s'$, in which case
\begin{equation}
\Exp[\tau_t \tau_{t+t'} \tau_s \tau_{s+s'}]
\leq \Exp[\tau_t^2] \, \Exp[\tau_{t+t'}^2]
= \left(\int_{\tau_t \geq 0} \tau_t^2 e^{-\tau_t} \dif \tau_t\right)^2
= 4, \nonumber
\end{equation}
with the last equality following from partial integration.
Now using the Paley-Zygmund inequality we get that
\begin{equation}
\Pr(W)
= \Pr(A > \Exp[A]/2)
\geq \frac{1}{4} \frac{\Exp[A]^2}{\Exp[A^2]}
\geq \frac{1}{16}. \nonumber
\end{equation}
Finally, by a union bound we get that $\Pr(W \cap Z) \geq 1 - \Pr(\bar W) - \Pr(\bar Z) \geq 1 - 15/16 - 1/20 \geq 1/80$, and so
\begin{align}
\Exp\left[ \sum_{t,t'=1}^T \tau_t \tau_{t+t'} \, I(\cY_t \in M, \cY_{t+t'} \notin M) \right]
&\geq \frac{1}{80} \Exp\left[ \sum_{t,t'=1}^T \tau_t \tau_{t+t'} \, I(\cY_t \in M, \cY_{t+t'} \notin M) \,\bigg|\, Z \cap W \right] \nonumber \\
&\geq \frac{1}{160} \sum_{t,t'=1}^T I(\cY_t \in M, \cY_{t+t'} \notin M). \nonumber
\qedhere
\end{align}
\end{proof}
Note that as Lemma \ref{lemsup:ctrw-dtrw} holds for any reversible Markov chain $P$, it immediate extends to interpolated Markov chains, $P(s)$ for any $s\in [0,1)$. Furthermore, the lemma also holds if we replace $P$ with $P^2$, where $P$ is any reversible Markov chain. Thus, Lemma \ref{lemsup:ctrw-dtrw} already gives us a relation between continuous-time Markov chains $(X^{(s)})$, with walk operator $e^{P^2(s)-I}$  and the Markov chain quantity of interest discrete-time Markov chains $(Z^{(s)})$, with walk operator $P^2(s)$. That is we obtain,
\begin{equation}
\int_0^{40T} \int_0^{40T} \Pr_{\pi_U}(X^{(s)}_t \in M, X^{(s)}_{t+t'} \notin M) \dif t \, \dif t'
\in \Omega\left( \sum_{t,t' = 1}^T \Pr_{\pi_U}(Z^{(s)}_t \in M, Z^{(s)}_{t+t'} \notin M) \right). \nonumber
\end{equation}

Now, all that is left is to relate discrete-time Markov chains with transition matrix $P$ and discrete-time Markov chains with transition matrix $P^2$, for any reversible Markov chain $P$. For this we state the following lemma: 

\begin{lemma}
\label{lemsup:dtrw-dtrw-square}
\textit{Consider a reversible discrete-time Markov chain $(Y)$ with lazy transition matrix $P$ and stationary distribution $\pi$.
Let $(Z)$ denote the discrete-time Markov chain with transition matrix $P^2$.
For any $T$ we have that}
\begin{equation}
\sum_{t,t' = 1}^T \Pr_{\pi_U}(Z_t \in M, Z_{t+t'} \notin M)
\in \Omega\left( \sum_{t,t' = 1}^T \Pr_{\pi_U}(Y_t \in M, Y_{t+t'} \notin M) \right). \nonumber
\end{equation}
\end{lemma}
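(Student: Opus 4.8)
The plan is to exploit the laziness of $P$ to compare the two-step chain $Z=P^2$ with the one-step chain $Y=P$ \emph{entry by entry}, after writing each summand as a product of entrywise-nonnegative objects. Denoting by $\Pi_M,\Pi_U$ the diagonal $0/1$ projectors onto $M$ and $U$, by $v=\pi_U/\pi(U)$ the conditional stationary row vector (which is stationary for both $P$ and $P^2$, since $\pi P=\pi$ forces $\pi P^2=\pi$), and by $\mathbf 1$ the all-ones column, I would first record
\[
\Pr_{\pi_U}(Y_t\in M,\,Y_{t+t'}\notin M)=v\,P^{t}\,\Pi_M\,P^{t'}\,\Pi_U\,\mathbf 1,
\]
and likewise $\Pr_{\pi_U}(Z_a\in M,\,Z_{a+b}\notin M)=v\,P^{2a}\,\Pi_M\,P^{2b}\,\Pi_U\,\mathbf 1$, using $Z^{k}=P^{2k}$. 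Every factor appearing here is entrywise nonnegative.

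The key inequality, and the only place the lazy hypothesis enters, is that $P-\tfrac12 I\ge 0$ entrywise (the off-diagonal entries of $P$ are nonnegative and the diagonal entries are $\ge 1/2$), so that $P^{2}-\tfrac12 P=P\bigl(P-\tfrac12 I\bigr)\ge 0$ entrywise. Right-multiplying by $P^{2a-2}\ge 0$ gives $P^{2a}\ge \tfrac12 P^{2a-1}$ entrywise for every $a\ge 1$, i.e. $P^{t}\le 2P^{t+1}$ whenever $t$ is odd. I can therefore round each exponent up to the nearest even integer at a multiplicative cost of $2$: put $a=\lceil t/2\rceil$ and $b=\lceil t'/2\rceil$, so that $2a\in\{t,t+1\}$ and $2b\in\{t',t'+1\}$. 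Sandwiching the entrywise bound for $P^{t}$ between the nonnegative row vector $v$ on the left and the nonnegative column $\Pi_M P^{t'}\Pi_U\mathbf 1$ on the right (and then the bound for $P^{t'}$ between $v P^{2a}\Pi_M$ and $\Pi_U\mathbf 1$) preserves the inequality, yielding
\[
\Pr_{\pi_U}(Y_t\in M,\,Y_{t+t'}\notin M)\ \le\ 4\,\Pr_{\pi_U}(Z_a\in M,\,Z_{a+b}\notin M),
\]
where the second time index is mapped consistently, $t+t'\mapsto 2a+2b=2(a+b)$.

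Finally I would sum over $1\le t,t'\le T$ and count multiplicities. Each target pair $(a,b)$ with $1\le a,b\le\lceil T/2\rceil$ is the image of at most two values of $t$ and at most two values of $t'$, hence of at most four source pairs; together with the factor $4$ above this gives
\[
\sum_{t,t'=1}^T \Pr_{\pi_U}(Y_t\in M,\,Y_{t+t'}\notin M)\ \le\ 16\sum_{t,t'=1}^{T}\Pr_{\pi_U}(Z_t\in M,\,Z_{t+t'}\notin M),
\]
which is exactly the claimed $\Omega$ bound. The hard part is really just the entrywise domination $P^{2}\ge\tfrac12 P$ and verifying that all intervening vectors are nonnegative so the inequality survives the matrix products; everything after that, including the boundary pairs with $2a>T$ (which I simply discard, only decreasing the right-hand side), is routine bookkeeping.
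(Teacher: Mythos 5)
Your proof is correct and is essentially the same argument as the paper's: the paper also uses laziness by writing $P=I/2+Q/2$ to get $\braket{\pi_U|P^{2t}\Pi_M P^{2t'}|1_U}\geq \tfrac12\braket{\pi_U|P^{2t-1}\Pi_M P^{2t'}|1_U}$ (your entrywise domination $P^2\geq\tfrac12 P$ sandwiched between nonnegative vectors is the same step), and then recovers the full sum over $\{1,\dots,T\}^2$ with the same overall constant $16$, merely phrasing the bookkeeping as an average of four shifted terms rather than your ceiling map with multiplicity count.
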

\begin{proof}
First note that for any $t>0$, $Z_t$ and $Y_{2t}$ are equally distributed, which implies that
\begin{equation}
\sum_{t,t' = 1}^T \Pr_{\pi_U}(Z_t \in M, Z_{t+t'} \notin M)
= \sum_{t,t' = 1}^T \Pr_{\pi_U}(Y_{2t} \in M, Y_{2t+2t'} \notin M)
\eqqcolon \sum_{t,t'=1}^T A_{2t,2t'}. \nonumber
\end{equation}
We would like to show that $\sum_{t,t'=1}^T A_{2t,2t'} \in \Omega\left( \sum_{t,t'=1}^T A_{t,t'} \right)$.
To this end, recall from Lemma \ref{lemsup:success-prob} the expression
\begin{equation}
A_{2t,2t'}
= \braket{\pi_U| P^{2t} \Pi_M P^{2t'} |1_U}. \nonumber
\end{equation}
Since $P$ is a lazy walk, we can rewrite it as $P = I/2 + Q/2$ for some other Markov chain $Q$.
This allows us to bound
\begin{equation}
A_{2t,2t'}
= \braket{\pi_U| (I/2 + Q/2) P^{2t-1} \Pi_M P^{2t'} |1_U}
\geq \frac{1}{2} \braket{\pi_U| P^{2t-1} \Pi_M P^{2t'} |1_U}
= \frac{1}{2} A_{2t-1,2t'}. \nonumber
\end{equation}
By the same argument we have $A_{2t,2t'} \geq \frac{1}{2} A_{2t,2t'-1}$ and hence $A_{2t,2t'} \geq \frac{1}{4} A_{2t-1,2t'-1}$.
Combining these bounds we see that
\begin{equation}
A_{2t,2t'}
= \frac{A_{2t,2t'} + A_{2t,2t'} + A_{2t,2t'} + A_{2t,2t'}}{4}
\geq \frac{A_{2t,2t'} + A_{2t-1,2t'} + A_{2t,2t'-1} + A_{2t-1,2t'-1}}{16}. \nonumber
\end{equation}
The proof now follows by noting that
\begin{equation}
\sum_{t,t'=1}^T A_{2t,2t'}
\geq \frac{1}{16} \sum_{t,t'=1}^T A_{2t,2t'} + A_{2t-1,2t'} + A_{2t,2t'-1} + A_{2t-1,2t'-1}
\geq \frac{1}{16} \sum_{t,t'=1}^T A_{t,t'}. \nonumber \qedhere
\end{equation}
\end{proof}
Thus, Lemma \ref{lemsup:ctrw-dtrw} and Lemma \ref{lemsup:dtrw-dtrw-square}, combined lead to the required bound in Eq.~\eqref{eqsup:required-lower-bound-expected-prob}. Finally, equipped with this result, we can use Lemma \ref{lem:amb} to obtain:

\begin{corollary} \label{cor:amb-ct}
Let $P$ be a lazy, reversible ergodic discrete-time Markov chain with stationary distribution $\pi$.
Let $(X^{(s)})$ denote the continuous-time Markov chain with transition matrix $P^2(s) - I$.
If $\pi(M) \leq 1/9$ and $T \geq 3 \mathrm{HT}(M)$ then choosing interpolation parameter $s \in \{1 - 1/r \,:\, r = 1,2,\dots,2^{\lceil \log(12T)\rceil}\}$ and times $t,t' \in [960T]$ uniformly at random gives that
\begin{equation}
\mathbb{E}_{s,t,t'}\left[ \Pr_{\pi_U}(X^{(s)}_t \in M, X^{(s)}_{t+t'} \notin M) \right]
\in \Omega(1/\log(T)). \nonumber
\end{equation}
\end{corollary}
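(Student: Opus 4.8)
The plan is to establish the corollary by composing the three preceding lemmas in sequence, descending from the continuous-time squared chain to the discrete-time interpolated chain of Ref.~\cite{ambainis2019quadratic}, while converting between the normalized expectations in the statement and the unnormalized sums and integrals in which Lemmas~\ref{lemsup:ctrw-dtrw} and~\ref{lemsup:dtrw-dtrw-square} are phrased. Concretely, Lemmas~\ref{lemsup:ctrw-dtrw} and~\ref{lemsup:dtrw-dtrw-square} together should furnish the continuous-to-discrete comparison in Eq.~\eqref{eqsup:required-lower-bound-expected-prob}, after which Lemma~\ref{lem:amb} supplies the $\Omega(1/\log T)$ lower bound on the discrete side.

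I would first rewrite the target as a normalized double integral,
\[
\mathbb{E}_{s,t,t'}\!\left[\Pr_{\pi_U}(X^{(s)}_t\in M,X^{(s)}_{t+t'}\notin M)\right]
=\frac{1}{(960T)^2}\,\mathbb{E}_s\!\left[\int_0^{960T}\!\!\int_0^{960T}\Pr_{\pi_U}(X^{(s)}_t\in M,X^{(s)}_{t+t'}\notin M)\,\dif t\,\dif t'\right],
\]
using that $t,t'$ are uniform on $[960T]$. The constant is fixed by $960=40\cdot24$: applying Lemma~\ref{lemsup:ctrw-dtrw} with $P$ replaced by the reversible chain $P^2(s)$ (permitted by the remark after that lemma) and with its time parameter set to $24T$ converts the integral over $[0,960T]^2$ into $\Omega\big(\sum_{t,t'=1}^{24T}\Pr_{\pi_U}(Z^{(s)}_t\in M,Z^{(s)}_{t+t'}\notin M)\big)$, where $(Z^{(s)})$ is the discrete chain with transition matrix $P^2(s)$. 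I would then apply Lemma~\ref{lemsup:dtrw-dtrw-square} at parameter $24T$ to pass from the two-step chain $(Z^{(s)})$ to the single-step interpolated chain $(Y^{(s)})$ with transition matrix $P(s)$. Both reductions need only reversibility (and laziness of $P$, which holds) and, crucially, hold termwise for every fixed $s\in[0,1)$.

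Taking the expectation over the random $s$ and using linearity to pull $\mathbb{E}_s$ through these termwise bounds reduces everything to lower bounding $\mathbb{E}_s\big[\sum_{t,t'=1}^{24T}\Pr_{\pi_U}(Y^{(s)}_t\in M,Y^{(s)}_{t+t'}\notin M)\big]$. Writing this sum as $(24T)^2$ times the uniform average over $t,t'\in[24T]$, it is exactly the quantity controlled by Lemma~\ref{lem:amb}: under the hypotheses $\pi(M)\le1/9$, $T\ge3\,\mathrm{HT}(M)$, and for the same interpolation set $\{1-1/r:r=1,\dots,2^{\lceil\log(12T)\rceil}\}$, that lemma yields $\mathbb{E}_{s,t,t'}[\cdots]\in\Omega(1/\log T)$, hence the sum is $\Omega((24T)^2/\log T)$ in expectation over $s$. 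Propagating this back through the chain of $\Omega$-relations and dividing by the normalization $(960T)^2$ gives $\mathbb{E}_{s,t,t'}[\cdots]\in\Omega\big((24/960)^2/\log T\big)=\Omega(1/\log T)$, the absolute constant $(24/960)^2$ being absorbed.

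The step I expect to be the main obstacle is the consistent bookkeeping of the time rescalings and normalizations across the three reductions: in particular, checking that Lemma~\ref{lemsup:ctrw-dtrw} invoked at parameter $24T$ produces exactly the continuous window $[0,960T]^2$ appearing in the statement, and that the hitting-time hypothesis fed into Lemma~\ref{lem:amb} is the one for the \emph{original} interpolated chain $P(s)$ rather than its square, which is what makes the final bound scale with $\mathrm{HT}(M)$. A secondary point is to confirm that squaring preserves reversibility and the stationary distribution $\pi$, so that $P^2(s)$ legitimately satisfies the hypotheses of Lemma~\ref{lemsup:ctrw-dtrw} and the chains $(X^{(s)})$ and $(Z^{(s)})$ share the starting distribution $\pi_U$ used throughout.
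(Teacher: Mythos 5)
Your proposal is correct and follows essentially the same route the paper takes: the paper proves this corollary implicitly by chaining Lemma~\ref{lemsup:ctrw-dtrw} (applied to $P^2(s)$, giving the $40\cdot 24T=960T$ window) with Lemma~\ref{lemsup:dtrw-dtrw-square} and then invoking Lemma~\ref{lem:amb}, exactly as you describe. Your bookkeeping of the normalizations and the checks on laziness, reversibility, and the stationary distribution of $P^2(s)$ are all sound and in fact more explicit than what the paper writes.
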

Thus, Corollary \ref{cor:amb-ct} combined with Eq.~\eqref{eqsup:lower-bound-succ-prob}, results in
\begin{equation}
\mathbb{E}\left[\nrm{\Pi_Me^{(D^2(s)-I)T}\ket{\pi_U}}^2\right]\in \Omega\left(1/(\log^2 T\right), \nonumber
\end{equation}
for $T\in\Theta\left(HT(P,M)\right)$, which proves Lemma 2.
\section{Analog procedure to prepare the ground states of a Hamiltonian}
In this section, we will show that the analog procedure leading to Eq.~\eqref{eqmain:evolved-state-fast-forwarding-3} prepares a quantum state that is $\eps$-close to the ground state of any Hamiltonian.\\~\\
\textbf{Assumptions --} Suppose we have a Hamiltonian $H$ with ground state $\ket{v_0}$ and ground energy $\lambda_0$, and assume that we are given a lower bound on the gap between the ground state and the first excited state of $H$, i.e.\ we are given $\Delta$ such that $|\lambda_1-\lambda_0|\geq \Delta$.\\ 
For clarity of exposition, we assume that the ground space of $H$ is non-degenerate. If this is not the case, e.g.~if the degeneracy of the ground space is $d$ and is spanned by mutually orthonormal eigenstates $\{\ket{v^{(\ell)}_0}\}_{l=1}^{d}$, then we will be preparing a quantum state $\ket{v_0}$ which is a projection onto the ground space given by
$$
\ket{v_0}=\dfrac{1}{\sqrt{\sum_{\ell=1}^{d} |c^{(\ell)}_0|^2}}\sum_{\ell=1}^d c^{(\ell)}_0\ket{v^{(\ell)}_0}.
$$ 
In addition, suppose we have access to some initial state $\ket{\psi_0}$ and a lower bound on the overlap $|\braket{\psi_0|v_0}|=c_0\geq \eta$. \\
Furthermore, for some desired accuracy $\eps\in (0,1)$, we will assume that we know the value of the ground energy to some precision parameter $\eps_g$ such that $\eps_g\in \Oo\left(\Delta/\sqrt{\log\frac{1}{\eta\eps}}\right)$. That is, we know some $E_0$ such that 
\begin{equation}
|\lambda_0-E_0| \leq \eps_g. \nonumber
\end{equation}
By implementing $H-(E_0-\eps_g)I$, we ensure that $0\leq\lambda_0\leq 2\eps_g$. This transformation also ensures that the lower bound for the spectral gap of $H$ remains $\Delta$. If also an upper bound on the maximum eigenvalue of $H$ is known, then we can actually assume that the spectrum of $H$ is in $[0,1]$.\\
Note that these assumptions are standard and apply to quantum algorithms for this problem designed in the circuit model \citep{ge2019faster,lin2020near}.\\~\\
\textbf{Running time and success probability --} Now we are in a position to formally state the ground state preparation algorithm and analyze its complexity, which we do in the following lemma.
\begin{lemma}
\label{lemsup:gsp-ham}
\textit{Suppose $\eps\in (0,1)$ and $\eta\leq 1/\sqrt{2}$. Furthermore suppose we have a Hamiltonian $H$ with ground state $\ket{v_0}$ with $\Delta$ being a lower bound on the spectral gap. Also, the ground state energy of $H$ is known up to a precision $\eps_g\in \Oo\left(\Delta/\sqrt{\log\frac{1}{\eta\eps}}\right)$. Then, given an initial state $\ket{\psi_0}$ satisfying $|\braket{\psi_0|v_0}|\geq \eta$, we output, with probability $\Oo(\eta^2)$, a state $\ket{\phi}$ such that $\nrm{\ket{\phi}-\ket{v_0}}\leq \eps$ by evolving the Hamiltonian $H'=H\otimes \hat{z}$ for time}
$$
T\in \Oo\left(\dfrac{1}{\Delta}\sqrt{\log \left(\dfrac{1}{\eta\eps}\right)}\right).
$$
\end{lemma}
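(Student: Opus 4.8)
The plan is to instantiate the analog imaginary-time procedure of Eq.~\eqref{eqmain:evolved-state-fast-forwarding-3}: coupling $\ket{\psi_0}$ to the oscillator ground state $\ket{\psi_g}$ and evolving under $H' = H\otimes\hat z$ for time $\sqrt{2t}$ produces $\ket{\eta_t} = e^{-tH^2}\ket{\psi_0}\ket{\psi_g} + \ket{\Phi}^\perp$, so that post-selecting on $\ket{\psi_g}$ in the ancilla yields the normalized state $\ket{\phi} \propto e^{-tH^2}\ket{\psi_0}$. It then remains to (i) choose $t$ so that $\ket{\phi}$ is $\eps$-close to $\ket{v_0}$, (ii) check that the post-selection succeeds with probability $\Omega(\eta^2)$, and (iii) read off the evolution time $T = \sqrt{2t}$.

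For step (i) I would work in the eigenbasis of $H$, writing $\ket{\psi_0} = \sum_j c_j\ket{v_j}$ with $H\ket{v_j} = \lambda_j\ket{v_j}$, so that $e^{-tH^2}$ acts as the spectral filter $e^{-t\lambda_j^2}$. After the shift $H \mapsto H - (E_0-\eps_g)I$ the assumptions give $0 \le \lambda_0 \le 2\eps_g$ and $\lambda_j \ge \Delta$ for every $j\ge 1$. Hence the ground-state weight is only mildly damped, $|c_0|^2 e^{-2t\lambda_0^2} \ge \eta^2 e^{-8t\eps_g^2}$, while every excited weight is strongly suppressed, $\sum_{j\ge1}|c_j|^2 e^{-2t\lambda_j^2} \le e^{-2t\Delta^2}$. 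Dividing these bounds gives the infidelity estimate
\[
1 - |\braket{v_0|\phi}|^2 \;\le\; \frac{\sum_{j\ge1}|c_j|^2 e^{-2t\lambda_j^2}}{|c_0|^2 e^{-2t\lambda_0^2}} \;\le\; \frac{1}{\eta^2}\,e^{-2t(\Delta^2 - 4\eps_g^2)},
\]
and since $\nrm{\ket{\phi}-\ket{v_0}}^2 \le 2\big(1-|\braket{v_0|\phi}|\big) \le 2\big(1-|\braket{v_0|\phi}|^2\big)$ for the correct global phase, controlling this expression controls the target distance.

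To finish step (i) I would set $t = \Theta\!\big(\log(1/(\eta\eps))/\Delta^2\big)$. The main subtlety — and the real content of the lemma — is the tension between the two exponentials: suppressing the excited subspace wants $t$ large, but the imperfect knowledge of $\lambda_0$ (which after the shift can be as large as $2\eps_g$) erodes the ground-state amplitude like $e^{-8t\eps_g^2}$ and so wants $t$ small. The precision hypothesis $\eps_g \in \Oo\big(\Delta/\sqrt{\log(1/(\eta\eps))}\big)$ is exactly calibrated so that at this value of $t$ one has $8t\eps_g^2 \in \Oo(1)$; consequently $e^{-2t(\Delta^2-4\eps_g^2)} \le e^{-t\Delta^2}$ up to constants and the infidelity drops below $\eps^2/2$, giving $\nrm{\ket{\phi}-\ket{v_0}}\le\eps$. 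I expect verifying this balance, and pinning down the constants so that the $8t\eps_g^2$ overhead stays bounded (the condition $\eta\le 1/\sqrt2$ keeping $\log(1/(\eta\eps))$ positive and the reported probability below one), to be the only delicate point.

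Steps (ii) and (iii) are then immediate. The post-selection probability equals the squared norm of the unnormalized filtered state, $\nrm{e^{-tH^2}\ket{\psi_0}}^2 = \sum_j|c_j|^2 e^{-2t\lambda_j^2} \ge |c_0|^2 e^{-2t\lambda_0^2} \ge \eta^2 e^{-8t\eps_g^2} \in \Omega(\eta^2)$, using the same $\Oo(1)$ bound on $8t\eps_g^2$. Finally the evolution time of $H'$ is $T = \sqrt{2t} \in \Oo\big(\tfrac{1}{\Delta}\sqrt{\log(1/(\eta\eps))}\big)$, as claimed.
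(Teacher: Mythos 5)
Your proposal is correct and follows essentially the same route as the paper's proof: work in the eigenbasis of the shifted Hamiltonian, use $e^{-tH^2}$ as a spectral filter, choose $t=\Theta\bigl(\log(1/(\eta\eps))/\Delta^2\bigr)$, and bound the post-selection probability by the damped ground-state weight. The only cosmetic difference is that you track the $e^{-8t\eps_g^2}$ erosion explicitly and bound the infidelity $1-|\braket{v_0|\phi}|^2$, whereas the paper folds the $\lambda_0$-dependence away via $\lambda_1^2-\lambda_0^2\geq(\lambda_1-\lambda_0)^2\geq\Delta^2$ and bounds $2-2\braket{\phi|v_0}$ directly; both yield the same conclusion.
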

\begin{proof}
We shift the overall eigenvalues of $H$ by $-E_0+\eps_g$ as explained previously. So now suppose $H$ has a spectral decomposition is $H=\sum_{j}\lambda_j\ket{v_j}\bra{v_j}$ with eigenvalues $\lambda_j\in (0,1]$ and particularly $0\leq \lambda_0\leq 2\eps_g$. 

Then the input quantum state, when expressed in the eigenbasis of $H$ is written as $\ket{\psi_0}=\sum_{j}c_j\ket{v_j}$. Without loss of generality, we assume that $c_0$ is real and positive. Then following the analog procedure outlined in the main article, we obtain from Eq.~\eqref{eqmain:evolved-state-fast-forwarding-3} that
\begin{equation}
\ket{\eta_t}=e^{-tH^2}\ket{\psi_0}\ket{\psi_g}+\ket{\Phi}^\perp. \nonumber
\end{equation} 
By post-selecting on obtaining the Gaussian state $\ket{\psi_g}$ on the second register, we are left with the following quantum state in the first register after a time $\sqrt{2t}$
\begin{equation}
\ket{\phi}=\dfrac{e^{-tH^2}\ket{\psi_0}}{\sqrt{\braket{\psi_0|e^{-2tH^2}|\psi_0}}}, \nonumber
\end{equation} 
with probability $\| e^{-tH^2}\ket{\psi_0}\ket{\psi_g} \|^2 = \braket{\psi_0|e^{-2tH^2}|\psi_0}$. Expressing $\ket{\phi}$ in the eigenbasis of $H$, we obtain
\begin{equation}
\ket{\phi}=\dfrac{c_0e^{-t\lambda^2_0}}{\sqrt{\braket{\psi_0|e^{-2tH^2}|\psi_0}}}\left[\ket{v_0}+\sum_{j\geq 1}\dfrac{c_j}{c_0} e^{-t(\lambda^2_j-\lambda^2_0)}\ket{v_j}\right], \nonumber
\end{equation}
where the normalization factor
\begin{equation}
\braket{\psi_0|e^{-2tH^2}|\psi_0}=|c_0|^2 e^{-2t\lambda^2_0}\left[1+\sum_{j\geq 1}\frac{|c_j|^2}{|c_0|^2}e^{-2t(\lambda^2_j-\lambda^2_0)}\right]. \nonumber
\end{equation}
Now, we intend to choose a value of $t$, so that $\ket{\phi}$ and the ground state $\ket{v_0}$ are $\eps$-close to each other in $\ell_2$-norm. We have,
\begin{equation}
\nrm{\ket{\phi}-\ket{v_0}}^2 = 2-2\braket{\phi|v_0}, \nonumber
\end{equation}
where we use the fact that $c_0\geq 0$ which implies $\braket{\phi|v_0}> 0$. Thus we have,
\begin{align}
\braket{\phi|v_0}&=\left[1+\sum_{j\geq 1} \dfrac{|c_j|^2}{|c_0|^2}e^{-2t(\lambda_j^2-\lambda_0^2)}\right]^{-1/2} \nonumber\\
                   &\geq \left[1+\dfrac{1-\eta^2}{\eta^2} e^{-2t(\lambda_1^2-\lambda_0^2)}\right]^{-1/2} \nonumber\\
                   &\geq 1-\dfrac{1-\eta^2}{2\eta^2} e^{-2t(\lambda_1^2-\lambda_0^2)}, \nonumber
\end{align}
which gives   
\begin{equation}
\nrm{\ket{\phi}-\ket{v_0}}^2\leq \dfrac{1-\eta^2}{\eta^2} e^{-2t(\lambda_1^2-\lambda_0^2)}\leq \dfrac{1-\eta^2}{\eta^2} e^{-2t(\lambda_1-\lambda_0)^2}. \nonumber
\end{equation}
Thus by choosing any value of $t$ such that
\begin{align}
t>\dfrac{1}{2\Delta^2}\log\left(\dfrac{1-\eta^2}{\eta^2\eps^2}\right), \nonumber
\end{align}
we ensure that $\nrm{\ket{\phi}-\ket{v_0}}\leq \eps$. The total evolution time of the interaction Hamiltonian is $T=\sqrt{2t}$, which is
\begin{equation}
T\in \Oo\left(\dfrac{1}{\Delta}\sqrt{\log \left(\dfrac{1}{\eta\eps}\right)}\right). \nonumber
\end{equation}
Now we have,
\begin{equation}
\braket{\psi_0|e^{-2tH^2}|\psi_0}= \dfrac{|c_0|^2 e^{-2t\lambda^2_0}}{|\braket{\phi|\psi_0}|^2}\geq |c_0|^2 e^{-2t\lambda^2_0}. \nonumber
\end{equation}
So the success probability of our algorithm
\begin{equation}
\braket{\psi_0|e^{-T^2H^2}|\psi_0}\geq |c_0|^2 e^{-T^2\lambda^2_0} \geq \Oo(\eta^2), \nonumber
\end{equation}
where we have used the fact that $\lambda_0\leq 2\eps_g$ and so $T\lambda_0\in \Oo(1)$.
\end{proof}
The running time of this simple analog procedure matches that of the best-known quantum algorithms for this problem in the circuit model \cite{ge2019faster,lin2020near}. 
\end{document}